\renewcommand\epsilon{\varepsilon}
\DeclareMathOperator*\myliminf{\underline{\rm lim}}
\DeclareMathOperator*\mylimsup{\overline{\rm lim}}
\newcommand{\define}{\em}
\newcommand{\R}{{\mathbb R}}
\newcommand{\N}{{\mathbb N}}
\newcommand{\PP}{{\mathbb P}}
\newcommand{\QQ}{{\mathbb Q}}
\newcommand{\EE}{{\mathbb E}}
\newcommand{\FF}{{\mathcal F}}
\newcommand{\NN}{{\mathcal N}}
\renewcommand{\SS}{{\mathcal S}}
\renewcommand{\AA}{{\mathcal A}}
\newcommand{\cd}{\text{c\` adl\` ag} }
\newcommand{\scl}[2]{\langle #1,#2 \rangle} 
\newcommand{\abs}[1]{\left| #1 \right|}
\newcommand{\set}[1]{\left\{#1\right\}} 
\newcommand{\sets}[2]{\set{#1\,:\,#2}} 
\newcommand{\inds}[1]{ {\mathbf 1}_{\set{#1}}} 
\newcommand{\seq}[1]{\set{#1_n}_{n\in\N}}
\newcommand{\eps}{\varepsilon}
\newcommand{\ld}{\lambda}
\newcommand{\conv}{\operatorname{conv}}
\newcommand{\el}{{\mathbb L}} 
\newcommand{\lone}{\el^1}
\newcommand{\linf}{\el^{\infty}}
\newcommand{\linfd}{(\linf)^*} 
\newcommand{\EN}{{\mathcal E}} 
\newcommand{\prf}[1]{ ( #1 )_{t\in [0,T]}} 
\newcommand{\norm}[1]{{||#1||}}
\newcommand{\cii}{(c^i)_{i=1,\dots, d}}
\newcommand{\AF}{\AA^f}
\newcommand{\UF}{{\mathcal U}^f}
\newcommand{\uxi}{\left(\UU^1(c^1),\dots, \UU^d(c^d)\right)}
\newcommand{\bu}{{\mathbf u}}
\newcommand{\bc}{{\mathbf c}}
\newcommand{\br}{{\mathbf r}}
\newcommand{\UU}{\mathbb{U}}
\newcommand{\UFM}{\UF_{-}}
\newcommand{\bb}{\mathbf{b}}
\newcommand{\XHc}{X^{H,c,e}}
\newcommand{\hQ}{\hat{Q}}
\newcommand{\hQQ}{\hat{\QQ}}
\newcommand{\OO}{{\mathcal O}}
\numberwithin{equation}{section}
\theoremstyle{plain}                
\newtheorem{theorem}{Theorem}[section]
\newtheorem{lemma}[theorem]{Lemma}
\newtheorem{proposition}[theorem]{Proposition}
\theoremstyle{definition}           
\newtheorem{definition}[theorem]{Definition}
\newtheorem{example}[theorem]{Example}
\newtheorem{assumption}[theorem]{Assumption}
\theoremstyle{remark}
\newtheorem{remark}[theorem]{Remark}
\title[Financial equilibria]{Financial equilibria in the semimartingale setting:
complete markets and markets with withdrawal constraints}
\author{Gordan \v Zitkovi\' c}
\address{Department of Mathematical Sciences\\ Carnegie Mellon
University \\ Wean Hall 7209 \\ Pittsburgh, PA 15213}
\date{\today}
\begin{document}
\maketitle
\begin{abstract} We establish existence of stochastic financial
equilibria on filtered spaces more general than the ones generated
by finite-dimensional Brownian motions. These equilibria are
expressed in real terms and span complete markets or markets with
withdrawal constraints. We deal with random endowment density
streams which admit jumps and general time-dependent utility
functions on which only regularity assumptions are imposed. As a
side-product of the proof of the main result, we establish a novel
characterization of semimartingale functions.
\end{abstract}

\section{Introduction}
\setcounter{equation}{0}
\paragraph{\em Existing results and history of the problem.}
The existence of financial equilibria in continuous-time financial
markets is one of the central problems in financial theory and
mathematical finance. Unlike the problems of utility maximization
and asset pricing where the price dynamics are given, the
equilibrium problem is concerned with the origin of  security
prices themselves. More precisely, our goal is to construct a
stochastic market with the property that when the price-taking
agents act rationally, supply equals demand. Of course, there are
many ways to interpret the previous sentence, even in the setting
of continuous-time stochastic finance - let alone broader
financial theory or economics as a whole. We are, therefore,
really talking about a whole class of problems.

Before delving into the specifics of our formulation, let us
briefly touch upon the history of the problem. Given the amount of
research published on the various facets of the financial
equilibrium, we can only mention a tiny fraction of the work
leading directly to the present paper. Many seminal contributions
not directly related to our research are left out. The notion of
competitive equilibrium prices as an expression of the basic idea
that the laws of supply and demand determine prices was introduced
by Leon Walras (see \cite{Wal74})  130 years ago. Rigorous
mathematical theory starts with \cite{ArrDeb54}. Continuous-time
stochastic models have been investigated by \cite{DufHua85} and
\cite{Duf86}, among many others. The direct predecessor of this
paper is the work of Karatzas, Lakner, Lehoczky and Shreve in
\cite{KarLakLehShr91}, \cite{KarLehShr90} and \cite{KarLehShr91}.
A convenient exposition of the results of these papers can be
found in Chapter 4. of \cite{KarShr98}. Recently, existence of an
equilibrium functional when utilities exhibit intertemporal
substitution properties has been established in \cite{BanRie01}.
\paragraph{\em Our contributions.} The motive leading our research was
to investigate how the relaxation of the assumption that the
filtration is generated by a Brownian motion affects the existence
theory for the financial equilibrium, and how stringent conditions
on the primitives (utilities, endowments, filtration) one needs to
assume in this case. We were particularly keen to impose minimal
conditions on utility functions and to allow endowment density
processes to admit jumps. As we are primarily concerned with the
{\em existence} of an equilibrium market, we stress that we have
not pursued in any detail the questions of uniqueness or the
financial consequences of our setup. We leave this interesting
line of research for the future, and direct the reader to
\cite{Dan93} and \cite{DanPon92}.  In the following paragraphs we
describe several directions in which this work extends existing
theory.

First, we start from a right-continuous and complete filtration
which we {\em do not} require to be generated by a Brownian
motion. Consequently, we look for the price processes in the set
of all finite-dimensional semimartingales, thus allowing for the
equilibrium prices with jumps. The conditions we impose on the
filtration are directly related with the possibility of obtaining
a {\em finite} number of assets spanning all uncertainty. In this
way, virtually any complete arbitrage-free market known in the
financial literature can arise as an equilibrium in our setting.

Second, we introduce a simple constraint in our model by limiting
the amounts the agents can withdraw from the trading account in
order to finance a consumption plan. This constraint is phrased in
terms of a with\-draw\-al-cap process, which we allow to take
infinite values - effectively including the possibility of a fully
complete market, with no withdrawal cap whatsoever.

Third, we relax regularity requirements imposed on the utility
functions. While these are still stronger than the typical
conditions found in the utility-maximization literature, we show
that one can develop the theory with assumptions less stringent
than, e.g. those in Chapter 4., \cite{KarShr98}. We also deal with
utility functionals which are not necessarily Mackey-continuous
due to unboundedness of the utility functions in the neighborhood
of zero. Moreover, there is no need for fine growth conditions
such as {\em asymptotic elasticity} (see \cite{KraSch99}) in our
setting. A principal feature of our model - jumps in the endowment
density processes - warrants the use and development of tools from
the general theory of stochastic processes. It is in this spirit
that we provide a novel characterization of semimartingale
functions (the functions of both time and space arguments, that
yield semimartingales when applied to semimartingales). Finally, a
result due to M\' emin and Shiryaev (\cite{MemShi79}) is used as
the most important ingredient in establishing a sufficient
condition on a positive semimartingale for the local martingale
part in its multiplicative decomposition to be a true martingale.

Another feature in which this paper differs from the classical
work (e.g. \cite{KarLehShr90}, \cite{KarLakLehShr91}) is in that
we do not introduce the representative agent's utility function
(which is impossible due to withdrawal constraints). Instead we
use Negishi's approach (see \cite{Neg60}) in the version described
in \cite{MasZam91}. This way the proof the existence of a
financial equilibrium is divided into two steps. In the first step
we establish the existence of an equilibrium pricing functional
(an {\em abstract equilibrium}). Next, we implement this pricing
functional through a stochastic market consisting of a finite
number of semimartingale-modeled assets.

\paragraph{\em Organization of the paper and some remarks on the notation.}
After the Introduction, in Section \ref{sec:model} we describe the
model, state the assumptions on its ingredients and pose the
central problem of this work. Section \ref{sec:exabs}  introduces
an abstract setup and establishes the existence of a financial
equilibrium there. In Section \ref{sec:abstosto}, we transform the
abstract equilibrium into a stochastic equilibrium as defined in
Section \ref{sec:model}. Finally, in Appendix \ref{sec:semi} we
develop the semimartingale results used in Section
\ref{sec:abstosto}: characterization of semimartingale functions,
and regularity of multiplicative decompositions. Apart from being
indispensable for the main result of our work, we hope they will
be of independent interest, as well.

Throughout this paper, all stochastic processes will be defined on
the time horizon $[0,T]$, where $T$ is a positive constant. To
relieve the notation, the stochastic process $\prf{X_t}$ will be
simply denoted by $X$, and its left-limit process $(X_{t-})_{t\in
[0,T]}$, by $X_-$. Unless specified otherwise, (in)equalities
between \cd processes will be understood pointwise, modulo
indistinguishability, i.e., $X\leq Y$ will mean $X_t\leq Y_t$, for
all $t\in [0,T]$, a.s. Finally, we use both notations ``$X(t)$''
and ``$X_t$'' interchangeably, the choice depending on
typographical circumstances.

\section{The Model}
\label{sec:model}
\paragraph{\em The information structure.}
We consider a stochastic economy on a finite time horizon $[0,T]$.
The uncertainty reveals itself gradually and is modeled by a
right-continuous and complete filtration $\prf{\FF_t}$ on  a
probability space $(\Omega,\FF,\PP)$, where we assume that
$\FF_0=\set{\emptyset,\Omega}$ mod $\PP$ and $\FF=\FF_T$. In order
for the finite-dimensional stochastic process spanning all the
uncertainty to exist, the size of the filtration $\prf{\FF_t}$
must be restricted:
\begin{definition}
\label{frp} A filtered probability space
$(\Omega,\FF,\prf{\FF_t},\PP)$, with $\prf{\FF_t}$ satisfying the
usual conditions, is said to have the {\bf finite representation
property} if for any probability $\QQ$, equivalent to $\PP$, there
exist a finite number $n$ of $\QQ$-martingales $Y^1,\dots, Y^n$
such that
\begin{enumerate}
    \item $Y^i$ and $Y^j$ are orthogonal for $i\not =j$, i.e.,
    the quadratic covariation $[Y^i,Y^j]_t$ vanishes for all
    $t\in [0,T]$, a.s.
    \item for every bounded $\QQ$-martingale $M$ there exists an
    $n$-dimensional predictable,
    $(Y^1,\dots, Y^n)$-integrable
    process $(H^1,\dots, H^n)$ such that
    \[\text{$M_t=\EE^{\QQ}[M_T]+\sum_{i=1}^n \int_0^t H^i_u\, dY^i_u,\
     \text{for all $t\in [0,T]$, a.s.}$}\]
\end{enumerate}
The smallest such number $n$ is called the {\bf martingale
multiplicity} of $(\Omega,\FF,\prf{\FF_t},\PP)$.
\end{definition}
\begin{example} The filtered probability spaces with finite
representation property include $n$-dimensional Brownian
filtration, filtrations generated by Poisson processes,
filtrations generated by Dritschel-Protter semimartingales (see
\cite{ProDri99}), or combinations of the above.
\end{example}
\begin{remark}
The notion of martingale multiplicity and the related notion of
the {\em spanning number of a filtration} have been introduced by
Duffie in \cite{Duf86}. Definition \ref{frp} differs from Duffie's
in that we explicitly require the existence of martingales
$(Y^1,\dots,Y^n)$, for {\em each} probability measure
$\QQ\sim\PP$. In \cite{Duf85}, Duffie proves that if we only
considered probability measures with $\frac{d\QQ}{d\PP}\in\linf$
in Definition \ref{frp}, it would be enough to postulate the
existence of the processes $(Y^1,\dots,Y^n)$ under $\PP$. It is an
open question whether one can achieve such a simplification under
less stringent conditions on $\QQ$.
\end{remark}
\begin{assumption}[Finite representation property] \label{ftp}
The filtered probability space $(\Omega,\FF,\prf{\FF_t},\PP)$ has
the finite representation property.
\end{assumption}
\begin{remark}
The finite representation property is used to ensure that the
existence of a stochastic implementation of an abstract financial
equilibrium with only a finite number of assets. Without this
property one could still build a financial equilibrium, but the
number of assets needed to span all the uncertainty might be
infinite.
\end{remark}

\paragraph{\em Random endowments.} There are $d\in\N$ agents in our economy
each of whom is receiving a {\bf random endowment} - a bounded and
strictly positive income stream, modeled by a semimartingale
$e^i$. We interpret the random variable $\int_0^t e^i_u\,du$ as
the total income received by agent $i$ on the interval $[0,t]$,
for $t<T$. At time $t=T$ there is a lump endowment of $e^i(T)$. To
simplify the notation, we introduce the measure $\kappa$ on
$[0,T]$ by $d\kappa_t=dt$ on $[0,T)$ and $\kappa(\set{T})=1$. The
cumulative random endowment on $[0,t]$ can now be represented as
$\int_0^t e^i_t\, d\kappa_t$, for all $t\in [0,T]$.
\begin{remark}
The results in this paper can be extended to the case where
$\kappa$ is an optional random measure with $\kappa(\set{T})>0$,
a.s. We do not pursue such an extension, as it would not add to
the content in any interesting way.
\end{remark}

In order for certain stochastic exponentials to be uniformly
integrable martingales, we need to impose a regularity requirement
on $e^i$, $i=1,\dots, d$, described in detail in  Appendix
\ref{sub:mult}.
\begin{definition}\label{def:nnx}
For a special semimartingale $X$, let $\NN(X)=\scl{M}{M}_T$, where
$X=M+A$ is a decomposition of $X$ into a local martingale $M$ and
a predictable process $A$ of finite variation, and $\scl{M}{M}$
denotes the compensator of the quadratic variation $[M,M]$.
\end{definition}
\begin{remark}
The random variable $\NN(X)$ from Definition \ref{def:nnx} will
usually be used in requirements of the form $\NN(X)\in\linf$.
Existence of the compensator $\scl{M}{M}$ and the special
semimartingale property of $X$ are tacitely assumed as parts of
such requirements.
\end{remark}
 The full strength of the following assumption
on random endowment processes $e^i$, $i=1,\dots,d$, is needed for
the existence of a stochastic equilibrium (Theorem \ref{main}),
and only part 1. for the abstract equilibrium (Theorem
\ref{exabs}).
\begin{assumption}[Regularity of random endowments]\label{ends}
For $i=1,\dots,d$,
\begin{enumerate}
\item \label{ends:abs} $e^i$ is an optional process, with
$\eps\leq e^i\leq 1/\eps$, for some $\eps>0$, \item
\label{ends:semi} $e^i$ is a (special) semimartingale and
$\NN(e^i)\in \linf $.
\end{enumerate}
\end{assumption}
\begin{example}
Processes $e^i$ satisfying conditions of Assumption \ref{ends}
include linear combinations of  processes of the form
$Y_t=h(t,X_t)$ where $1/\eps \geq  h \geq\eps>0$ is a
$C^{1,2}$-function, with $h_x$, and $h_{xx}$ uniformly bounded,
and $X$ is
 a diffusion process with a bounded diffusion coefficient, or
a L\' evy process whose jump measure $\nu$ satisfies $\int_{\R}
x^2\, \nu(dx)<\infty$.  Homogeneous and inhomogeneous Poisson
processes and non-exploding continuous-time Markov chains are
examples of allowable processes $X$.
\end{example}
\paragraph{\em Utility functions.} Apart from being characterized
by the random endowment process, each agent represents her
attitude towards risk by a von Neumann-Morgenstern
 utility function $U^i$. Before we list the exact regularity
  assumptions placed on $U^i$,
 we need the following  definition:
\begin{definition}
For a continuously differentiable function $f:[x_1,x_2]\to\R$ we
define the {\bf total convexity norm
$\norm{f}=\norm{f}_{[x_1,x_2]}$} by
\begin{equation}
\nonumber
    \begin{split}
      \norm{f}_{[x_1,x_2]}=\abs{f(x_1)}+
      \abs{f'(x_1)}+{\mathrm TV}(f';\, [x_1,x_2]),
    \end{split}
\end{equation}
where ${\mathrm TV}(f';\, [x_1,x_2])$ denotes the total variation
of the derivative $f'$ of $f$ on $[x_1,x_2]$. A function
$f:[0,T]\times [x_1,x_2]\to\R$, continuously differentiable in the
second variable, is said to be {\bf convexity-Lipschitz} if there
exists a constant $C$ such that, for all $t,s\in [0,T]$, we have
$\norm{f(t,\cdot)-f(s,\cdot)}\leq C \abs{t-s}$.
 A function $f:[0,T]\times I\to\R$ (where I is a subset of $\R$)
   is called {\bf locally convexity-Lipschitz} if
 its restriction $f|_{[0,T]\times [x_1,x_2]}$ is
 convexity-Lipschitz, for any compact interval $[x_1,x_2]$.
\end{definition}
\begin{remark}
A sufficient condition for a function $f:[0,T]\times I\to\R$ to be
(locally) convexity-Lipschitz is that $f(t,\cdot)\in C^2(I)$, for
all $t\in [0,T]$, and $f_{xx}(x,\cdot)$ is Lipschitz, (locally)
uniformly in $x$.
\end{remark}
\begin{assumption}[Regularity of utilities] For each $i=1,\dots, d$,
 the utility function
$U^i:[0,T]\times (0,\infty)\to\R$ has the following  properties
\label{utilreg}
\begin{enumerate}
\item \label{utilreg:util} $U^i(t,\cdot)$ is strictly concave,
continuously differentiable and strictly increasing for each $t\in
[0,T]$. Moreover, the function $U(\cdot,x)$, is bounded for any
$x\in (0,\infty)$. \label{Uconc} \item \label{utilreg:inv} The
{\em inverse-marginal-utility} functions $I^i:[0,T]\times
(0,\infty)\to (0,\infty)$, $I^i(t,y)=U_x(t,\cdot)^{-1}(t,y)$ are
locally convexity-Lipschitz and satisfy
\begin{equation}
\label{inadas}
    \begin{split}
      \lim_{y\to\infty} I^i(t,y)=0,\ \lim_{y\to 0} I^i(t,y)=\infty,\
      \text{uniformly in $t\in [0,T]$.}
    \end{split}
\end{equation}
\end{enumerate}
\end{assumption}
\begin{example} The most important example of a utility function
satisfying Assumption \ref{utilreg} is so-called {\em discounted
utility} $U(t,x)=\exp(-\beta t) \hat{U}(x)$, where $\beta>0$ is
the impatience factor, and $\hat{U}\in C^2(\R_+)$ satisfies
$\hat{U}'>0$ and $\hat{U}''$ is a strictly negative function of
finite variation on compacts. A sufficient (but not necessary)
condition for this is $\hat{U}\in C^3(\R_+)$. Power utilities
$\hat{U}(x)=x^p/p$, for $p\in (-\infty,1)\setminus \set{0}$ and
$\hat{U}(x)=\log(x)$ belong to this class.
\end{example}
\begin{remark}Unlike the problems of utility
maximization (see \cite{KraSch99}, e.g.) where the utility
function is only required to be strictly concave and continuously
differentiable, existence of financial equilibria requires a
higher degree of smoothness (compare to Chapter 4.,
\cite{KarShr98}, where the existence of three continuous
derivatives is postulated in the Brownian setting).
\end{remark}
Total utility accrued by an agent whose consumption equals $c_t
(\omega)$ at time $t\in [0,T]$ in the state of the world
$\omega\in\Omega$, will be modeled as the aggregate of {\em
instantaneous utilities} $U^1(t,c_t(\omega))$ in an additive way.
More precisely, for each agent $i=1,\dots, d$, we define the {\bf
utility functional} $\UU^i$, taking values in $[-\infty, \infty]$.
Its action on an optional process $c$ is given~by
$\UU^i(c)\triangleq \EE[\int_0^T U^i(t,c(t))\,
  d\kappa_t]$ when $\EE[\int_0^T \min(0,U^i(t,c(t)))\,
  d\kappa_t]>-\infty$ and $\UU^i(c)=-\infty$, otherwise.
\begin{remark}
Due to the fact that the final time-point $t=T$ plays a special
role in the definition of the endowment processes $e^i$, one would
like to be able to redefine the agent's utility quite freely
there. Utility functions with virtually no continuity requirements
at $t=T$ are indeed possible to include in our framework, but we
decided not to go through with this in order to keep the
exposition as simple as possible. It will suffice to note that
most of the restrictions involving the time variable placed on the
utility functions in Assumption \ref{utilreg} are there to ensure
that the pricing processes obtained in Theorem \ref{exabs} are
semimartingales and not merely optional processes. All of them
superfluous at $t=T$, since the semimartingale property of a
process $\prf{X_t}$ is preserved if we replace $X_T$ by another
$\FF_T$-measurable random variable.
\end{remark}
\paragraph{\em Investment and consumption.}
The basic premise of equilibrium analysis is that agents engage in
trade with each other in order to improve their utilities. To
facilitate this exchange, a stock market consisting of a finite
number of risky assets $S$, and one riskless asset $B$ is to be
set up. In order to have a meaningful mathematical theory, we
shall require these processes to be semimartingales with respect
to $(\Omega, \FF, \prf{\FF_t}, \PP)$. Moreover, both the riskless
asset $B$ and its left-limit process $B_-$ will be required to be
strictly positive \cd predictable processes of finite variation.

An agent trades in the market by dynamically readjusting the
portion of her wealth kept in various risky, or the riskless
asset. This is achieved by a choice of a portfolio process $H$ (in
an adequate admissibility class to be specified shortly) with the
same number of components as $S$. At the same time, the agent will
accrue utility by choosing the consumption rate according to an
optional consumption process $c$. The components of the process
$H$ stand for the number of shares of each risky asset held in the
portfolio. The trading is financed by borrowing from (or
depositing in) the riskless asset. With that in mind, the equation
governing the dynamics of the wealth $\XHc$ of an agent becomes
\begin{equation}
    \label{wealth}
    \begin{split}
  d\XHc_t=H_t\, dS_t+\frac{(\XHc_{t-}-H_t S_{t-})}{B_{t-}}
  \, d B_t-c(t)\, d\kappa_t+e(t)\, d\kappa_t.
    \end{split}
\end{equation}
We assume that the agent has no initial wealth, i.e., $\XHc_0=0$
(this assumption is in place only to simplify exposition). The net
effect of market involvement of the agent is a redistribution of
wealth across times and states of the world. The income stream $e$
(which would have been the only possibility without the market)
gets swapped for another stream - the consumption process $c$.

There are, invariably, exogenous factors which limit the scope of
the market activity. In this paper we deal with one of the
simplest such limitations - withdrawal constraints. After having
traded for the day (with the net gain of $H_t\,
dS_t+(\XHc_{t-}-H_t S_{t-})/B_{t-}\, d B_t$), and having received
the endowment $e_t\, d\kappa_t$, the agent decides to consume
$c_t\, d\kappa_t$. If this amount is too large, it is likely to be
unavailable for withdrawal from the trading account on a short
notice. Therefore, a cap of $\Gamma^i$ is placed on the amount
agent $i$ can consume at time $t$.  We assume that
$\Gamma^i,i=1,\dots,d$ are  $(0,\infty]$-valued \cd adapted
process satisfying $\Gamma^i>e^i$. We impose no withdrawal
restrictions for $t=T$, effectively requiring $\Gamma^i_T=\infty$
a.s. Moreover, an assumption analogous to Assumption \ref{ends} is
placed on $\Gamma^i$:
\begin{assumption}
\label{with} For each $C>0$, the stochastic process
$\min(\Gamma^i,C)$ is a semimartingale satisfying
$\NN(\min(\Gamma^i,C))\in \linf $.
\end{assumption}
In addition to an abstract,  exogenously given withdrawal-cap
processes, in the following example we describe several other
possibilities.
\begin{example} In all of the following examples, we set
$\Gamma^i_T=\infty$:
\begin{enumerate}
    \item {\em Complete markets}: $\Gamma^i_t=\infty$, $t\in [0,T]$.
    \item {\em Proportional constraints}: For a constant $\gamma>1$,
     $\Gamma^i_t=\gamma e^i_t$, $t\in [0,T)$.
    \item {\em Constant overdraft limit}: for $\delta>0$ we set
    $\Gamma^i_t=e^i_t+\delta$, $t\in [0,T)$.
    \end{enumerate}
\end{example}
\paragraph{\em Market Equilibrium.} Before giving a rigorous definition of
an equilibrium market, we introduce the notion of affordability
for a consumption process $c$. Here we assume that the market
structure (in the form of the withdrawal-cap process $\Gamma$, a
finite-dimensional semimartingale $S$ (risky assets), and a
positive predictable \cd process $B$ of finite variation (riskless
asset)) and the random endowment process $e$ are given.
\begin{definition} \label{afford} An
$(S,B,e,\Gamma)$-{\bf affordable consumption-investment strategy}
is a pair $(H,c)$ of an $S$-integrable predictable {\bf portfolio
process} $H$, and an optional {\bf consumption process} $c\geq 0$
such that
\begin{enumerate}
\item There exists $a\in\R$ such that $a+\int_0^t H_u\, dS_u\geq
0$, for all $t\in [0,T]$, a.s. \item The wealth process
$\prf{X_t}$, as defined in (\ref{wealth}), satisfies $X_T\geq 0$,
a.s. \item The consumption process $c$ satisfies $c_t\leq
\Gamma_t$ for all $t\in [0,T]$, a.s.
\end{enumerate}
\end{definition}
\begin{definition} \label{defequ} A pair $(S,B)$ of a
finite-dimensional semimartingale $S$ and a positive predictable
\cd process $B$ of finite variation
  is said to form an {\bf equilibrium
market} if for each agent $i=1,\dots,d$ here exists an
$(S,B,e^i,\Gamma^i)$-affordable consumption-investment strategy
$(H^i,c^i)$ satisfying the following two conditions:
\begin{enumerate}
\item $\sum_i c^i_t=\sum_i e^i_t$ and $\sum_i H^i_t=0$, for all
$t\in [0,T]$, a.s. \item For each $i$, $c^i$ maximizes the utility
functional $\UU^i(\cdot)$ over all $(S,B,e^i,\Gamma^i)$-affordable
consumption-investment strategies $(H,c)$.
\end{enumerate}
\end{definition}

\section{Existence of an abstract equilibrium}
\label{sec:exabs} In this section we establish the existence of an
abstract version of a market equilibrium. The notion of an
abstract equilibrium encapsulates the tenet that markets in
equilibrium should clear when all agents act rationally. The
full-fledged stochastic market has been abstracted away in favor
of a pricing functional $\QQ$. $\QQ$ will be an element of the
topological dual $\linfd$ of the {\em consumption space} $\linf$,
so that the action $\scl{\QQ}{c}$ of $\QQ$ onto a consumption
process $c$ has the natural interpretation of the price of the
consumption stream $c$. Our setup allows for utility functions
unbounded in the neighborhood of $x=0$ (in order to be able to
deal with the important examples from financial theory). Even
though these utilities follow the philosophy of the von Neumann -
Morgenstern theory, they are {\em not} von Neumann - Morgenstern
utilities in the sense of \cite{Bew72}. In fact, the corresponding
utility functionals are not necessarily Mackey-continuous and thus
the abstract theory pioneered by Truman Bewley and others does not
apply directly to our setting. The structure of our proof of the
existence of an abstract equilibrium follows the skeleton laid out
in \cite{MasZam91}. For that reason we focus on the substantially
novel parts of the proof and only outline the rest. In particular,
we present a detailed proof of closedness of the set of utility
vectors in Lemma \ref{UFclosed}, but merely refer to the
corresponding parts of \cite{MasZam91}  for the results whose
derivation is a more-or-less straightforward modification of
existing results.

\paragraph{\em Functional-analytic setup.} In what follows, $\linf$
will denote the Banach space of ($\kappa\otimes \PP$)-essentially
bounded processes, measurable with respect to the $\sigma$-algebra
$\OO$ of $\prf{\FF_t}$-optional sets. $\linf_+$ will denote the
positive orthant of $\linf$, i.e.,  the set of all
$(\kappa\otimes\PP)$-a.e. nonnegative elements in $\linf$. All
$\prf{\FF_t}$-optional processes will be identified with the
corresponding $\OO$-measurable random variables without explicit
mention, and the equalities and inequalities will always be
understood in $(\kappa\otimes\PP)$-a.e. sense.

The set of all bounded consumption processes $c$ satisfying the
consumption constraints introduced via cap processes $\Gamma^i$,
will be denoted by $\AA^i$, i.e., $\AA^i = \sets{c\in\linf_+}{
c\leq \Gamma^i}$. Also, define $\AA=\sets{\cii}{c^i\in\AA^i}$, and
its subset $\AF$ consisting of only those allocations which can be
produced by redistributing the aggregate endowment $e=\sum_i e^i$,
i.e., $\AF=\{\cii\in\AA\,:\, \sum\nolimits_i c^i=e\}$

The topological dual $\linfd$ of $\linf$ can be identified with
the set of all finitely-additive measures $\QQ$ on the
$\sigma$-algebra $\OO$, weakly-absolutely continuous with respect
to $\kappa\otimes\PP$, i.e. for $A\in\OO$, $\QQ[A]=0$ whenever
$(\kappa\otimes\PP)[A]=0$.

\begin{remark}\ \label{Ala}\
We will consider the set of finitely-additive probabilities as a
subset of $\linfd$, supplied with the weak * topology
$\sigma(\linfd, \linf)$. It is a consequence of Alaoglu's theorem
 that any collection of finitely-additive probabilities is relatively
$\sigma(\linfd,\linf)$-compact. Furthermore, the closedness of the
set of finitely-additive probabilities (in the space of all
finite-additive measures, and w.r.t the $\sigma(\linfd,
\linf)$-topology) implies that the cluster-points of nets of
finitely-additive probabilities are finitely-additive
probabilities themselves. In the sequel, weak * topology will
always refer to the $\sigma(\linfd, \linf)$ topology of the pair
$(\linfd,\linf)$.
\end{remark}

We can now define the concept of an abstract equilibrium. Instead
of a semimartingale price process, an abstract equilibrium
requires the existence of a finitely-additive probability
$\QQ\in\linfd$ which takes the role of a pricing functional acting
directly on consumption processes. Given such a finitely-additive
probability $\QQ$, the {\bf budget set} $B^i(\QQ)$ of agent $i$ is
defined by $ B^i(\QQ)=\sets{c\in \linf_+}{ c\in\AA^i \text{\ and\
} \scl{\QQ}{c}\leq\scl{\QQ}{e^i}}$.
\begin{definition}\label{absequ} A pair
$(\QQ,\cii)$ of a finitely-additive probability $\QQ$ and an
allocation $\cii\in\AA$ is called an {\bf abstract equilibrium} if
\begin{enumerate}
\item $\sum_i c^i=\sum_i e^i$,\  i.e., $\cii\in \AF$. \item For
any $i=1,\dots,d$, $c^i\in B^i(\QQ)$ and $\UU^i(c^i)\geq \UU^i(c)$
for all $c\in B^i(\QQ)$.
\end{enumerate}
\end{definition}
\paragraph{\em Existence of an abstract equilibrium.}
To simplify notation in some proofs and statements we assume that
the utility functionals $\UU^i$ are normalized so that
$\UU^i(e^i)=0$ for all $i=1,\dots, d$.

We start by introducing $\UF$ - the set of all $d$-tuples of
utilities which can be achieved by different allocations
$\cii\in\AF$, i.e.,
\begin{equation}
    \label{defu}
    \begin{split}
        \UF=\sets{\uxi}{\cii\in\AF},
    \end{split}
\end{equation}
and $\UFM=\UF-[0,\infty)^d$ -  the set of all vectors in $\R^d$
dominated by some element in $\UF$. The elements in $\UFM$ will be
called {\bf utility vectors}.
 Our first lemma identifies several properties of $\UFM$, the most
 important of which is closedness.
\begin{lemma}\label{UFclosed}
    The set $\UFM$ is non-empty, convex and  closed.
\end{lemma}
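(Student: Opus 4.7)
Non-emptiness is immediate: the allocation $(e^1,\dots,e^d)$ lies in $\AF$ (the constraint $e^i<\Gamma^i$ is part of Assumption \ref{with}), so by the normalization $\UU^i(e^i)=0$ the origin lies in $\UF\subseteq\UFM$. Convexity follows from concavity of each $U^i(t,\cdot)$: given $u,v\in\UFM$ dominated by $(\UU^i(c^i))_i,(\UU^i(\tilde c^i))_i$ with $(c^i)_i,(\tilde c^i)_i\in\AF$, the convex combination $\lambda c^i+(1-\lambda)\tilde c^i$ remains in $\AF$ because $\AA^i$ is convex and the feasibility constraint $\sum_i c^i=e$ is linear, while concavity yields $\UU^i(\lambda c^i+(1-\lambda)\tilde c^i)\geq \lambda\UU^i(c^i)+(1-\lambda)\UU^i(\tilde c^i)\geq \lambda u^i+(1-\lambda)v^i$.

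The core of the lemma is closedness, and this is the main obstacle because the utility functionals $\UU^i$ are not Mackey-continuous in general. The plan is to approximate a limit point of $\UFM$ by a.s.-convergent convex combinations. Let $u_n\to u$ in $\R^d$ with $u_n\in\UFM$, and pick allocations $(c_n^i)_{i=1}^d\in\AF$ with $\UU^i(c^i_n)\geq u_n^i$. Since $0\leq c_n^i\leq e\leq 1/\eps$ (by Assumption \ref{ends}.\ref{ends:abs}), the sequence $\{(c_n^1,\dots,c_n^d)\}_n$ is bounded in $L^2(\kappa\otimes\PP)^d$. Extract a weakly convergent subsequence; by Mazur's lemma applied jointly in all components, there exist forward convex combinations $\bar c^i_n=\sum_{k\geq n}\lambda_{n,k}c^i_k$ (with the \emph{same} weights $\lambda_{n,k}$ for all $i$) converging in $L^2$, hence along a further subsequence $\kappa\otimes\PP$-a.e., to optional processes $c^1,\dots,c^d$. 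The common weights preserve the clearing condition $\sum_i\bar c^i_n=\sum_i e^i=e$, and the convexity of each $\AA^i$ (affine cap constraint $\bar c_n^i\leq\Gamma^i$) is preserved under the limit, so $(c^i)_i\in\AF$.

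It remains to push the utility inequality through the limit. By concavity of $U^i(t,\cdot)$,
\begin{equation*}
\UU^i(\bar c^i_n)\geq \sum_{k\geq n}\lambda_{n,k}\UU^i(c^i_k)\geq \sum_{k\geq n}\lambda_{n,k}u^i_k,
\end{equation*}
and the right-hand side converges to $u^i$ because it is a tail convex combination of a sequence converging to $u^i$. I would then establish the upper semicontinuity estimate $\limsup_n \UU^i(\bar c^i_n)\leq \UU^i(c^i)$ by splitting $U^i=(U^i)^+-(U^i)^-$. The positive part satisfies $U^i(t,\bar c^i_n(t))^+\leq U^i(t,1/\eps)^+$, which is bounded by Assumption \ref{utilreg}.\ref{utilreg:util}; hence bounded/dominated convergence applies to it. For the negative part, which may blow up near zero, continuity of $U^i(t,\cdot)$ together with a.s.\ convergence of $\bar c^i_n$ and Fatou's lemma give $\liminf_n \EE[\int_0^T U^i(t,\bar c^i_n(t))^-\,d\kappa_t]\geq \EE[\int_0^T U^i(t,c^i(t))^-\,d\kappa_t]$. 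Combining both yields $\UU^i(c^i)\geq\limsup_n\UU^i(\bar c^i_n)\geq u^i$, so $u\in\UFM$.

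The main technical obstacle is arranging the a.s.\ convergence through \emph{common} convex combinations across all $i$ (so that feasibility $\sum_i\bar c_n^i=e$ is preserved) together with the one-sided control needed for the Fatou argument on the unbounded-below utility; the uniform upper bound $c_n^i\leq 1/\eps$ and the boundedness of $U^i(\cdot,1/\eps)$ are precisely what make this work.
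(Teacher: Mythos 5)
Your proof is correct and follows the same skeleton as the paper's: pass to a.e.-convergent forward convex combinations with common weights, observe that $\AF$ is preserved in the limit, push the utility bound through by concavity, and close with a one-sided (Fatou-type) upper-semicontinuity estimate. The one genuine difference is the tool used to produce the convex combinations: the paper invokes a vector-valued Komlos-type lemma (citing Schwartz) directly on the $\lone(\kappa\otimes\PP)$-bounded sequence of allocations, whereas you use $\el^2$-boundedness, weak sequential compactness of $\el^2(\kappa\otimes\PP)^d$, and Mazur's lemma applied to the $d$-tuple. Both routes yield the crucial feature that the weights are common across agents so that $\sum_i \bar c^i_n = e$ is preserved; Mazur's lemma is arguably more elementary (just Hahn--Banach plus reflexivity) while Komlos' theorem is sharper in general and dispenses with the preliminary weak-compactness step, but for the uniformly bounded sequences here either works equally well. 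Your positive/negative split of $U^i$ makes the Fatou step more explicit than the paper's terse justification and is correct: dominated convergence on $(U^i)^+$ (dominated by the bounded $U^i(\cdot, \sum_j 1/\eps)$) combined with Fatou on the nonnegative $(U^i)^-$ gives exactly the required $\limsup_n \UU^i(\bar c^i_n) \leq \UU^i(c^i)$. One tiny arithmetic slip: since $e = \sum_j e^j$ with each $e^j \leq 1/\eps$, the uniform bound is $c^i_n \leq e \leq d/\eps$, not $1/\eps$; this has no effect on the argument.
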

\begin{proof}
$\UFM$ is obviously non-empty, and its convexity follows easily
from convexity of $\AF$. It remains to show that it is closed. Let
$\seq{\bu}$, $\bu_n = ( u^1_n, u^2_n, \ldots, u^d_n )$, be a
sequence in $\UFM$
  converging to $\bu = ( u^1, u^2, \ldots, u^d ) \in \R^d$. By the
  definition of the set $\UFM$, there exist two sequences
 $\bc_n=(c^1_n,$ $c^2_n,\dots c^d_n) \in \AF$ and $\br_n=(r^1_n,\dots,
r^d_n)\in\R^d_+$  such that $\UU^i(c^i_n)=u^i_n+r^i_n$. Since
$u^i_n\leq\UU^i(c^i_n)\leq \UU^i(e)<\infty$, we can assume -
passing to a subsequence if necessary - that there exists a vector
$\hat{\bu}=(\hat{u}^1,\dots,\hat{u}^d )$ such that
$\UU^i(c^i_n)\to \hat{u}^i\geq u^i$.

   For any $i=1\dots d$, the
  sequence  $\seq{c^i}$ is bounded in
  $\linf$, and therefore also in $\lone(\kappa\otimes\PP)$.
  By a simple extension of the classical
  Komlos' theorem (see \cite{Sch86}) to the case of
  $\R^d$-valued random variables,  there exists an infinite
   array of nonnegative weights
  $(\alpha_k^n)^{n\in\N}_{k=n,\ldots, k_n}$ and a $d$-tuple
   $\cii$ of
  nonnegative optional processes with the following properties:
  $\sum_{k=n}^{k_n} \alpha_k^n=1$ and $\tilde{c}^i_n=
  \sum_{k=n}^{k_n} \alpha_k^n c^i_k\,\to\, c^i$,
  $(\kappa\otimes\PP)$-a.e.
Consequently, $\sum_i c^i=e$ and $c^i\leq \Gamma^i$, so that
$\cii\in\AA^f$.

To show that $\bu\in\UFM$, we use concavity and right-continuity
of the utility functions and the Fatou Lemma
 (the use of which is justified by the fact that
 $\tilde{c}^i_n\leq e$, for all $i$ and all $n\in\N$)
 in the following chain of inequalities:
\begin{equation}
    \nonumber
    \begin{split}
  \UU^i(c^i)=\UU^i(\lim_n \tilde{c}^i_n)\geq \mylimsup_n
  \UU^i(\tilde{c}^i_n)\geq
 \mylimsup_n \sum_{k=n}^{k_n} \alpha_k^n
 \UU^i(c^i_k)=\lim_n \UU^i(c^i_n)=\hat{u}^i\geq u^i,
    \end{split}
\end{equation}
\end{proof}

The next task is to establish the  existence of {\em supporting
measures} for {\em weakly optimal} utility vectors . We start with
definitions of these two concepts.
\begin{definition}
  A finitely-additive probability $\QQ$ is said
  to \textbf{support} a \label{supp} vector $\bu=(u^1,\dots, u^d)$
   $\in\R^d$ if
  for any allocation $\bc=\cii\in\AA$ with the property that
  $\UU^i ( c^i ) \geq u^i$ for all $i=1,\dots, d$, we have
$    \scl{\QQ}{\sum\nolimits_i c^i} \geq \scl{\QQ}{\sum\nolimits_i
e^i}$.
  The set of all finitely-additive probability measures
  supporting a vector
  $\bu\in\R^d$ is denoted by $P(\bu)$.
\end{definition}

\begin{definition}\label{wopt}
A vector $\bu=(u^1,\dots, u^d)$ in $\UFM$ is said to be {\define
weakly optimal} if there is no allocation $\cii\in\AF$ with the
property that $\UU^i(c^i)>u^i$ for all $i=1,\dots,d$.
\end{definition}
\begin{lemma}[Second Fundamental Theorem of Welfare Economics]
\label{Pnonempty}
  For a weakly optimal utility vector $\bu\in\UFM$, the set
  $P(\bu)$ of finitely-additive probabilities supporting $\bu$ is
  non-empty, convex and weak * compact
\end{lemma}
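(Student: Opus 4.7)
My plan is to treat the three assertions separately. Convexity of $P(\bu)$ is immediate, since the defining inequality $\scl{\QQ}{\sum_i c^i} \geq \scl{\QQ}{\sum_i e^i}$ is linear in $\QQ$ and convex combinations of finitely-additive probabilities are again finitely-additive probabilities. Weak$\,*$ compactness I would derive from Remark \ref{Ala}: the set of all finitely-additive probabilities is weak$\,*$-closed and norm-bounded (hence weak$\,*$-compact by Alaoglu), and $P(\bu)$ is its intersection with the family of weak$\,*$-closed half-spaces $\set{\QQ \in \linfd : \scl{\QQ}{\sum_i c^i - e} \geq 0}$ indexed by allocations $\bc \in \AA$ with $\UU^i(c^i) \geq u^i$ for each $i$.

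The substantive content is non-emptiness, which is the Second Welfare Theorem for our setup. I intend to run the standard Hahn--Banach separation argument in $\linf$ applied to the convex set
$$K \;=\; \Big\{\textstyle\sum_{i=1}^{d} c^i - e + w \;:\; c^i \in \AA^i,\ \UU^i(c^i) \geq u^i \text{ for each } i,\ w \in \linf_+\Big\}.$$
Convexity of $K$ follows from concavity of each $\UU^i$ together with convexity of $\AA^i$ and of $\linf_+$. Granting for the moment that $K$ is disjoint from the open cone $N = \set{z \in \linf : z \leq -\delta \text{ a.e.\ for some } \delta > 0}$, the geometric Hahn--Banach theorem furnishes a nonzero $\QQ \in \linfd$ with $\scl{\QQ}{y} \geq 0 \geq \scl{\QQ}{z}$ for all $y \in K$ and $z \in N$. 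Since $K \supseteq \linf_+$ (choose any $\bc \in \AF$ realizing $\bu$ and let $w$ range freely), $\QQ$ is nonnegative on $\linf_+$; normalizing by its total mass (positive since $\QQ\neq 0$ and $\QQ\geq 0$) produces a finitely-additive probability, and the supporting property from Definition \ref{supp} is read off from $\sum_i c^i - e \in K$ (take $w = 0$) whenever $\UU^i(c^i) \geq u^i$ for all $i$.

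The principal obstacle is the disjointness $K \cap N = \emptyset$, where weak optimality and the strict inequality $\Gamma^i > e^i$ must interact. If $K \cap N$ were non-empty we would have $c^i \in \AA^i$, $w \in \linf_+$ and $\delta > 0$ with $\sum_i c^i \leq e - \delta$ uniformly in $(t,\omega)$, and the task is to manufacture from this uniform aggregate slack a strict improvement of every agent, contradicting Definition \ref{wopt}. I would build the improvement in two stages: first replace $c^i$ by $\bar c^i = (1-\eps_0)c^i + \eps_0 e^i$, which keeps $\bar c^i \in \AA^i$, opens pointwise cap-room $\Gamma^i - \bar c^i \geq \eps_0(\Gamma^i - e^i) > 0$, and, by concavity and the normalization $\UU^i(e^i) = 0$, contracts utility by at most a factor $(1-\eps_0)$; then redistribute the residual aggregate slack $e - \sum_i \bar c^i \geq (1-\eps_0)\delta$ inside the available cap-room to produce $\tilde c^i \in \AA^i$ with $\sum_i \tilde c^i = e$ and $\tilde c^i$ strictly exceeding $\bar c^i$ on a set of positive $(\kappa\otimes\PP)$-measure for every $i$. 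Strict monotonicity and continuity of $U^i$, together with $\eps_0$ chosen small enough that the utility gain from the redistribution dominates the loss from the contraction, then yield $\UU^i(\tilde c^i) > u^i$ for every $i$, which is the required contradiction. The transcription from separation to support then follows the corresponding step in \cite{MasZam91}.
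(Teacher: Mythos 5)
Your overall plan --- convexity from linearity in $\QQ$, weak$*$ compactness from Alaoglu together with the observation that $P(\bu)$ is an intersection of weak$*$-closed half-spaces, and non-emptiness via Hahn--Banach separation of the improvement set $K$ from the open negative cone $N$ --- is exactly the separating-hyperplane scheme the paper cites from Mas-Colell and Zame, and the bookkeeping around the separating functional (anchoring the separation level at $0$, positivity, normalization to a finitely-additive probability, reading off the supporting property) is done correctly.

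The gap is in your proof of the disjointness $K\cap N=\emptyset$. You contract to $\bar c^i=(1-\eps_0)c^i+\eps_0 e^i$ and record that this opens cap-room bounded below by $\eps_0(\Gamma^i-e^i)$, and you then claim that for $\eps_0$ small the redistribution gain dominates the contraction loss. But at points where $c^i$ sits at the cap, the cap-room you are invoking is itself of order $\eps_0$, so the strict exceedance of $\tilde c^i$ over $\bar c^i$ --- and hence the utility gain --- has no $\eps_0$-independent lower bound; you are comparing two quantities that a priori both vanish at rate $\eps_0$, and the limit argument is inconclusive as written. The ingredient you need but do not invoke is $\Gamma^i_T=\infty$ together with $\kappa(\set{T})=1>0$: at $t=T$ the cap is void for every $\eps_0$, so you can award each agent a fixed share $(1-\eps_0)\delta/d$ of the slack there, and the resulting gain $\EE\bigl[U^i(T,\bar c^i_T+(1-\eps_0)\delta/d)-U^i(T,\bar c^i_T)\bigr]$ stays bounded away from zero as $\eps_0\to 0$ while the contraction loss is at most $\eps_0\max(u^i,0)\to 0$. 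In fact once $\Gamma^i_T=\infty$ is used, the contraction step is dispensable: on $[0,T)$ distribute the pointwise slack $e-\sum_j c^j\geq\delta$ among agents with remaining room (possible since $\sum_i\Gamma^i>e$), and at $t=T$ split it equally among all $d$ agents; every agent then strictly improves on a set of positive $\kappa\otimes\PP$-measure, contradicting weak optimality directly.
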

\begin{proof} The proof relies on a well-know
separating-hyperplane-type argument. See \cite{MasZam91}, Section
8., pp. 1859-1860 for more details.
\end{proof}
Having established the closedness and convexity of the set $\UFM$
in Lemma \ref{UFclosed}, and the existence of supporting
functionals for weakly optimal utility vectors in Lemma
\ref{Pnonempty}, it suffices to use the proof of Theorem 7.1, p.
1856 in \cite{MasZam91} to establish the following abstract
existence theorem:

\begin{theorem}\label{exabs}Under Assumptions
\ref{ends}.\ref{ends:abs}, \ref{utilreg}.\ref{utilreg:util} and
\ref{utilreg}.\ref{utilreg:inv}, there exists an abstract
equilibrium $(\QQ,\cii)$.
\end{theorem}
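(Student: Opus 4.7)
The plan is to follow the Negishi-style strategy of \cite{MasZam91}: reduce the infinite-dimensional equilibrium problem to a finite-dimensional fixed-point problem on the simplex of social weights $\Delta^{d-1} = \set{\ld \in \R^d_+ : \sum_i \ld_i = 1}$, using Lemma \ref{UFclosed} and Lemma \ref{Pnonempty} to supply the necessary convex-geometric structure. First I would parameterize the weakly Pareto frontier of $\UFM$ by $\Delta^{d-1}$: for each $\ld$, the linear functional $\bu \mapsto \sum_i \ld_i u^i$ attains its maximum $\bu(\ld)$ on $\UFM$, since $\UFM$ is closed and convex by Lemma \ref{UFclosed} and bounded above on $\AF$ by Assumption \ref{utilreg}.\ref{utilreg:util} together with boundedness of $e$. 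The maximizer is a weakly optimal vector in the sense of Definition \ref{wopt}, and the Komlos-type convex-combination argument used in Lemma \ref{UFclosed} produces an allocation $\bc(\ld) = (c^1(\ld),\dots,c^d(\ld)) \in \AF$ with $\UU^i(c^i(\ld)) \geq u^i(\ld)$ for all $i$.

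Second, Lemma \ref{Pnonempty} furnishes, for each weakly optimal $\bu(\ld)$, a non-empty, convex, weak*-compact set $P(\bu(\ld)) \subset \linfd$ of supporting finitely-additive probabilities. For $\QQ \in P(\bu(\ld))$, define the budget gap $z^i(\ld,\QQ) = \scl{\QQ}{c^i(\ld)-e^i}$. Since $\sum_i c^i(\ld) = \sum_i e^i$, Walras' identity $\sum_i z^i(\ld,\QQ) = 0$ holds automatically. I would then introduce the correspondence
\[
\Phi(\ld) \;=\; \myargmax_{\ld'\in\Delta^{d-1}} \sum\nolimits_i \ld'_i\, z^i(\ld,\QQ),\qquad \QQ\in P(\bu(\ld)),
\]
and apply Kakutani's theorem to obtain $\ld^*\in\Phi(\ld^*)$ together with $\QQ^*\in P(\bu(\ld^*))$ for which $z^i(\ld^*,\QQ^*)\le 0$ for every $i$; Walras' identity forces equality, placing $c^i(\ld^*)$ in the budget set $B^i(\QQ^*)$ with binding budget.

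Third, I would upgrade the Pareto-optimality of $(c^i(\ld^*))_i$ to individual optimality inside the budget sets. If some $\tilde c \in B^i(\QQ^*)$ satisfied $\UU^i(\tilde c) > \UU^i(c^i(\ld^*))$, then combining $\tilde c$ with the remaining $c^j(\ld^*)$ and absorbing the aggregate mismatch using strict monotonicity of the $\UU^j$ would yield an allocation $(c'^j)_j \in \AA$ with $\UU^j(c'^j) \geq u^j(\ld^*)$ for all $j$ but $\scl{\QQ^*}{\sum_j c'^j} < \scl{\QQ^*}{\sum_j e^j}$, contradicting $\QQ^* \in P(\bu(\ld^*))$. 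Hence $(\QQ^*,(c^i(\ld^*))_i)$ is the required abstract equilibrium.

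The main technical obstacle is verifying that the composite correspondence $\ld \mapsto P(\bu(\ld))$, and hence $\Phi$, is upper hemicontinuous in the weak* topology on $\linfd$. This hinges on two continuity statements: graph closedness of $\ld \mapsto \bu(\ld)$, which follows from closedness of $\UFM$ established in Lemma \ref{UFclosed}; and the fact that cluster points of weak*-convergent nets of finitely-additive probabilities remain probabilities, which is exactly the content of Remark \ref{Ala} via Alaoglu's theorem. Once these are in place, the Kakutani argument runs verbatim as in Theorem 7.1 of \cite{MasZam91}, and the existence claim follows.
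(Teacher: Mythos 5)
Your overall architecture matches the paper's: reduce to the Negishi-style fixed-point argument of Mas-Colell and Zame, with Lemma \ref{UFclosed} (closedness and convexity of $\UFM$) and Lemma \ref{Pnonempty} (supporting probabilities) supplying the geometric inputs. The paper's own proof is precisely this citation to Theorem 7.1 of \cite{MasZam91}. However, the Kakutani step as you have written it does not close on its own. At a fixed point $\ld^* \in \argmax_{\ld'\in\Delta^{d-1}} \sum_i \ld'_i\, z^i(\ld^*,\QQ^*)$ one obtains $\sum_i \ld^*_i z^i(\ld^*,\QQ^*)=\max_j z^j(\ld^*,\QQ^*)$, and the market-clearing identity $\sum_i z^i=0$ then yields $\max_j z^j\ge 0$, not $\le 0$. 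The Gale--Nikaido--Debreu argument you are implicitly invoking requires the Walras-law \emph{inequality} $\ld\cdot z(\ld)\le 0$, which concerns the weighted dot product and is not the same as the unweighted sum vanishing; in the Negishi framework that inequality is not automatic, and establishing it (or replacing $\Phi$ with a projection-type map $\ld_i\mapsto [\ld_i+z^i(\ld)]^+$ normalized to $\Delta^{d-1}$, together with a boundary argument ruling out $\ld^*_i=0$) is precisely what \cite{MasZam91} supplies and what your sketch omits. There is also a selection issue you gloss over: when $\ld$ lies on the boundary of the simplex, the maximizer of $\sum_j\ld_j u^j$ over $\UFM$ is highly non-unique (each coordinate of $\UFM$ is unbounded below), so ``the'' maximizer $\bu(\ld)$ is not well-defined and your claim that it is weakly optimal requires a careful selection and a perturbation argument. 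Neither issue is fatal to the Negishi program --- both are resolved in \cite{MasZam91}, to which the paper rightly defers --- but as stated your fixed-point step has a genuine gap.
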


\section{From abstract to stochastic equilibria}
\label{sec:abstosto} Our next task is to show that the abstract
equilibrium obtained in the previous section  can be implemented
as a stochastic equilibrium. We first note that the equilibrium
functional $\QQ$ must be countably-additive and equivalent to
$\kappa\otimes \PP$. We omit the proof as it follows the argument
from Theorem 8.2, p. 1863 in \cite{MasZam91}, using the fact that
$\Gamma^i>e^i$ and $\Gamma^i_T=\infty$ for all $i=1,\dots, d$.

\begin{lemma} \label{countadd} Let $(\QQ,\cii)$ be an abstract
equilibrium. Then $\QQ$ is countably additive and equivalent to
$\kappa\otimes\PP$.
\end{lemma}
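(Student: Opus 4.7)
The plan is to apply the Yosida--Hewitt decomposition $\QQ=\QQ^r+\QQ^s$, where $\QQ^r$ is countably additive and absolutely continuous with respect to $\kappa\otimes\PP$ and $\QQ^s$ is purely finitely additive, and then separately prove (i) $\QQ^s=0$ and (ii) the density $d\QQ^r/d(\kappa\otimes\PP)$ is strictly positive $(\kappa\otimes\PP)$-a.e. Both steps argue by contradiction using small perturbations of the equilibrium allocation $c^i$ that remain in the budget set $B^i(\QQ)$ but strictly raise $\UU^i$, contradicting the optimality asserted by Definition \ref{absequ}. The crucial structural features enabling those perturbations are the slackness $\Gamma^i>e^i$, which provides room to increase consumption on a set of positive $\kappa\otimes\PP$-measure, and the Inada-type condition $\lim_{y\to 0} I^i(t,y)=\infty$ (equivalently, $U^i_x(t,0^+)=\infty$), which guarantees that extra consumption produces genuine utility gain no matter the current level. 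The condition $\Gamma^i_T=\infty$ gives an unlimited reservoir for transfers at the terminal date.

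For step (i), I would assume $\QQ^s\neq 0$ and invoke the fact that a purely finitely additive $\QQ^s$ is concentrated on a sequence of optional sets $D_n$ with $(\kappa\otimes\PP)(D_n)\to 0$ while $\QQ^s$ assigns full mass to each $D_n$. Fix an agent $i$, and modify $c^i$ by lowering it on $D_n$ by a small amount $\delta \mathbf{1}_{D_n}$ (staying positive because $c^i\geq\eps'>0$ follows from the aggregate-feasibility bounds and the Inada condition) while adding $\delta'\mathbf{1}_{\{T\}\times F}$ at the terminal date on a suitable $F\in\FF_T$ where $\Gamma^i=\infty$. Choosing $\delta'$ so that $\scl{\QQ^r}{\delta'\mathbf{1}_{\{T\}\times F}}<\scl{\QQ}{\delta\mathbf{1}_{D_n}}$ (possible since $\QQ^s$ mass dominates while $\QQ^r$-cost of the added consumption stays bounded), the perturbed process lies in $B^i(\QQ)$. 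The utility change on $D_n$ vanishes as $n\to\infty$ by dominated convergence (using $c^i\leq e$, $(\kappa\otimes\PP)(D_n)\to 0$, and the integrability of $U^i(\cdot,c^i)$), whereas the utility gain at $T$ is strictly positive and bounded away from $0$ uniformly in $n$. This strict improvement contradicts optimality and forces $\QQ^s=0$.

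For step (ii), suppose an optional set $A$ has $(\kappa\otimes\PP)(A)>0$ but $\QQ(A)=0$. Since $\Gamma^i>e^i$ and $c^i$ is bounded, the set $A\cap\{c^i<\Gamma^i\}$ still has positive $\kappa\otimes\PP$-measure (after possibly replacing $A$ by the intersection with $\{\Gamma^i\geq C\}$ for some large $C$; at worst we use the part of $A$ lying in $\Omega\times\{T\}$, where $\Gamma^i_T=\infty$). Define $\tilde c^i=c^i+\lambda\mathbf{1}_A$ for small $\lambda>0$. Then $\scl{\QQ}{\tilde c^i}=\scl{\QQ}{c^i}$, so $\tilde c^i\in B^i(\QQ)$, and strict concavity together with the Inada condition gives $\UU^i(\tilde c^i)>\UU^i(c^i)$, again contradicting optimality. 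Hence $\QQ$ is equivalent to $\kappa\otimes\PP$.

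The main obstacle, and the reason one defers to \cite{MasZam91}, Theorem 8.2, is the careful bookkeeping of the perturbation in step (i): one must simultaneously keep the perturbed consumption inside $\AA^i$, inside $B^i(\QQ)$, and strictly positive, while exploiting the asymptotic concentration of $\QQ^s$ on $\kappa\otimes\PP$-small sets. The hypotheses $\Gamma^i>e^i$ and $\Gamma^i_T=\infty$ are used essentially here, and once they are in place the argument is a direct adaptation of the Mas-Colell--Zame construction, whence the proof can be omitted in favor of the reference.
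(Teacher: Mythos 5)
Your overall plan---Yosida--Hewitt decomposition $\QQ=\QQ^r+\QQ^s$ followed by perturbation/contradiction arguments exploiting $\Gamma^i>e^i$, $\Gamma^i_T=\infty$ and the clearing condition---is the route the paper takes by deferring to Theorem~8.2 of Mas-Colell--Zame, and your step~(ii) is essentially right once you shrink $A$ to a subset on which some $\Gamma^i-c^i$ is uniformly positive (always possible, since $\sum_i\Gamma^i>\sum_i e^i=\sum_i c^i$ on $\Omega\times[0,T)$).

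Step~(i), however, leans on an assertion that is not available here: you write that $c^i\geq\eps'>0$ ``follows from the aggregate-feasibility bounds and the Inada condition.'' Neither gives this. Aggregate feasibility $\sum_j c^j=e\geq\eps$ bounds $\max_j c^j$ from below pointwise, not any fixed $c^i$; and the Inada condition only forces $c^i>0$ $(\kappa\otimes\PP)$-a.e.\ (else $\UU^i(c^i)=-\infty$), not uniformly. A uniform lower bound would indeed follow from Lemma~\ref{hasform} plus boundedness of $Q$, but both come \emph{after} Lemma~\ref{countadd} and depend on it, so invoking them would be circular. Without the bound, the subtraction $\delta\mathbf 1_{D_n}$ may not be feasible, and your dominated-convergence step is unjustified: you would need to dominate $U^i(t,c^i_t-\delta)$, which can blow up toward $-\infty$ when $c^i_t-\delta$ is near $0$.

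The repair is to localize the reduction rather than assume a global lower bound. Since $\sum_j c^j\geq\eps$, each $(t,\omega)\in D_n$ lies in $\bigcup_j\{c^j\geq\eps/d\}$; by finite subadditivity of $\QQ^s$ and $\QQ^s(D_n)=\QQ^s(\Omega)$, some agent $i_n$ has $\QQ^s\big(D_n\cap\{c^{i_n}\geq\eps/d\}\big)\geq\QQ^s(\Omega)/d$, and by pigeonhole one may fix a single agent $i$ along a subsequence. Reducing $c^i$ by $\eps/(2d)$ only on $D'_n:=D_n\cap\{c^i\geq\eps/d\}$ keeps the perturbed consumption $\geq\eps/(2d)$, produces budget savings converging to a quantity $\geq\frac{\eps}{2d^2}\QQ^s(\Omega)>0$, and makes the utility loss vanish by dominated convergence because $U^i(\cdot,x)$ is bounded in $t$ for each fixed $x>0$ (Assumption~\ref{utilreg}.\ref{utilreg:util}), so the integrand on $D'_n$ is uniformly bounded while $(\kappa\otimes\PP)(D'_n)\to 0$. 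With this modification the rest of your bookkeeping in step~(i) goes through.
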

In Lemma \ref{hasform} we use convex duality to describe the
solutions of agents' utility-maximization problems in an
equilibrium:
\begin{lemma} \label{hasform}
Suppose that $(\QQ,\cii)$ is an abstract equilibrium. Then there
exist constants $\ld^i>0$, $i=1,\dots, d$,  such that the
consumption processes $c^i,i=1,\dots,d$ are of the form
\begin{equation}
    \label{formc}
    \begin{split}
         c^i_t=\min(\Gamma^i_t, I^i(t,\ld^i Q_t)),
    \end{split}
\end{equation}
 where $Q=\prf{Q_t}$ is the optional version of the Radon-Nikodym
  derivative of
$\QQ$ with respect to $\kappa\otimes\PP$.
\end{lemma}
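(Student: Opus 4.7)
The plan is to treat each agent's problem as a concave optimization with one linear (budget) constraint and a pointwise cap constraint, and to extract a Lagrange multiplier that decouples the problem timewise and statewise. By Lemma~\ref{countadd}, $\QQ$ is countably additive and equivalent to $\kappa\otimes\PP$, so it has a strictly positive optional Radon--Nikodym density $Q$ and $\scl{\QQ}{c}=\EE[\int_0^T c_t Q_t\,d\kappa_t]$. For each $i$, by Definition~\ref{absequ}, $c^i$ maximizes the concave functional $\UU^i$ over the convex set
\[
B^i(\QQ)=\sets{c\in\linf_+}{c\leq\Gamma^i,\ \scl{\QQ}{c}\leq\scl{\QQ}{e^i}}.
\]

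First I would check that the budget constraint binds at $c^i$: if $\scl{\QQ}{c^i}<\scl{\QQ}{e^i}$, then because $\Gamma^i>e^i$ and $U^i(t,\cdot)$ is strictly increasing (Assumption~\ref{utilreg}.\ref{utilreg:util}), one could add a small bounded positive perturbation supported where $c^i<\Gamma^i$, stay in $B^i(\QQ)$, and strictly improve $\UU^i(c^i)$, contradicting optimality. Next I would establish the existence of a Lagrange multiplier $\lambda^i>0$ such that $c^i$ maximizes
\[
c\ \longmapsto\ \EE\!\left[\int_0^T \bigl(U^i(t,c_t)-\lambda^i Q_t c_t\bigr)\,d\kappa_t\right]
\]
over $\sets{c\in\linf_+}{c\leq\Gamma^i}$. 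This is the standard Kuhn--Tucker / Fenchel saddle-point result for concave maximization subject to a single linear inequality constraint; a Slater point is provided by $\tfrac12 e^i$, which lies strictly below $\Gamma^i$ and strictly within the budget (since $Q>0$ a.e.). The positivity $\lambda^i>0$ is forced by strict increasingness of $U^i$ together with the fact that the budget constraint binds.

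With this Lagrangian characterization, the problem decouples pointwise: for $(\kappa\otimes\PP)$-a.e.\ $(t,\omega)$, $c^i_t(\omega)$ maximizes
\[
x\ \longmapsto\ U^i(t,x)-\lambda^i Q_t(\omega)\,x,\qquad x\in[0,\Gamma^i_t(\omega)].
\]
Because $U^i(t,\cdot)$ is strictly concave and $C^1$ with derivative satisfying the Inada-type limits $U^i_x(t,0+)=\infty$ and $U^i_x(t,\infty)=0$ encoded in Assumption~\ref{utilreg}.\ref{utilreg:inv}, the unconstrained first-order condition $U^i_x(t,x)=\lambda^i Q_t$ has the unique solution $x=I^i(t,\lambda^i Q_t)\in(0,\infty)$. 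If $I^i(t,\lambda^i Q_t)\leq\Gamma^i_t$, it is the constrained maximizer; otherwise the objective is strictly increasing on $[0,\Gamma^i_t]$ and the maximizer is $\Gamma^i_t$. In either case
\[
c^i_t=\min\bigl(\Gamma^i_t,\, I^i(t,\lambda^i Q_t)\bigr),
\]
as claimed.

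The main obstacle will be the rigorous justification of the Lagrange multiplier in this infinite-dimensional, weak-star setting, particularly because $\Gamma^i$ may take the value $+\infty$ and the utility functional need not be Mackey-continuous. One has to verify a genuine Slater-type interior condition and, after obtaining $\lambda^i$, argue that optimality of $c^i$ for the penalized problem implies pointwise optimality; the latter is a standard measurable-selection / variational argument using that the penalized integrand is a normal convex integrand in $x$, bounded above because of $\lambda^i>0$ and the Inada behavior of $I^i$.
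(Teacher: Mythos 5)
Your proposal reaches the correct conclusion, and the broad convex-duality skeleton is the same as the paper's, but the route to the Lagrange multiplier is genuinely different. The paper truncates the constraint set to the $\sigma(\linf,\lone)$-compact set $N(c^1)=\{c\in\linf_+ : c\leq\min(\Gamma^1,\|c^1\|_{\linf})\}$, verifies the hypotheses of Sion's minimax theorem for the Lagrangean $L(c,\ld)=\UU^1(c)-\ld\scl{\QQ}{c-e^1}$ on $N(c^1)\times[0,\infty)$, and then invests most of its effort in a coercivity estimate for the dual value function $v(\ld)=\ld\scl{\QQ}{e^1}+\EE\int_0^T V(t,\ld Q_t;m^1_t)\,d\kappa_t$, showing $v(\ld)\to\infty$ as $\ld\to\infty$ (using the Inada conditions uniformly in $t$) so that the infimum of $v$ is attained at a finite $\ld^1$. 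You instead dualize only the single linear budget constraint and invoke a Slater--Kuhn--Tucker separation in $\R^2$, with $\tfrac12 e^i$ as Slater point; this sidesteps both the weak-star compactness/truncation step and the coercivity computation, since the separation argument delivers a finite $\ld^i\geq 0$ directly. Your approach is cleaner once one accepts the standard Lagrangian-multiplier theorem for a proper concave objective with a single affine constraint over a convex set; the price is that the step you flag as the main obstacle, ``rigorous justification of the Lagrange multiplier,'' is precisely what the paper expends the bulk of its proof doing by other means. One small inaccuracy: you write that $\ld^i>0$ ``is forced by strict increasingness of $U^i$ together with the fact that the budget constraint binds,'' but binding of the budget is complementary slackness (a consequence of $\ld^i>0$, not its cause). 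The correct reasoning, as in the paper, is that $\ld^i=0$ would make the pointwise penalized objective strictly increasing, forcing $c^i_t=\Gamma^i_t$ wherever $\Gamma^i_t<\infty$ and leaving no maximizer on the positive-measure set where $\Gamma^i_t=\infty$ (e.g., $t=T$); since $\Gamma^i>e^i$, this also contradicts the budget constraint, so $\ld^i>0$.
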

\begin{proof}
We prove the lemma for $i=1$.  Let $N(c^1)$ be the set of all
$c\in\linf_+$ such that $c\leq \min(\Gamma^1,\norm{c^1}_{\linf})$.
$N(c^1)$ is a $\sigma(\linf,\lone)$-compact subset in $\linf_+$,
and  by Komlos' Lemma the restriction of $\UU^1$ to $N(c^1)$ is
$\sigma(\linf,\lone)$-upper-semicontinuous and concave. By Lemma
\ref{countadd}, the finitely-additive measure
 $\QQ$ is countably-additive so the
Lagrangean function $L:N(c^1)\times [0,\infty)\to
[-\infty,\infty)$, $L(c,\ld)=\UU^1(c)-\ld \scl{\QQ}{c-e^1}$
satisfies the conditions of the Minimax theorem (see
\cite{Sio58}).We know that the maximizer $c^1$ of the functional
$\UU^1$ over $B^1(\QQ)$ trivially satisfies $c^1\leq
\norm{c^1}_{\linf}$, so
\begin{equation}
    \nonumber
    \begin{split}
\UU^1(c^1)&=\sup_{c\in B^1(\QQ)\cap N(c^1)}\UU^1(c) =\sup_{c\in
N(c^1)} \inf_{\ld\geq
0} L(c,\ld) = \inf_{\ld\geq 0} \sup_{c\in N(c^1)} L(c,\ld)\\
&=\inf_{\ld\geq 0}  \Big( \ld \scl{\QQ}{e^1}+\EE\int_0^T
V(t,\ld Q_t;\, m^1_t)\,d\kappa_t\Big),\\
    \end{split}
\end{equation}
where $m^1_t=\min(\Gamma^1_t,\norm{c^1}_{\linf})$, and the
function $V:[0,T]\times[0,\infty)\times (0,\infty)\to\R$ is given
by
\begin{equation}
    \nonumber
    \begin{split}
  V(t,\ld;\, \xi )\triangleq\sup_{x\in [0,\xi)} (U^1(t,x)-x\ld)
=\begin{cases} V(t,\ld;\,\infty),& \ld>U^1_x(t,\xi), \\
U^1(t,\xi)-\ld\xi,& \ld\leq U^1_x(t,\xi).\end{cases}
    \end{split}
\end{equation}
 $V$ is
convex and nonincreasing in $\ld$, and nondecreasing in $\xi$. The
function $v:[0,\infty)\to [-\infty,\infty]$,  where $v(\ld)=\ld
\scl{\QQ}{e^1}+\EE\int_0^T V(t,\ld Q_t;\, m^1_t)\,d\kappa_t$, is
convex and proper, since $\inf_{\ld\geq 0} v(\ld)=\UU^1(c^1)\in
(-\infty,\infty)$. Furthermore, Assumption
\ref{utilreg}.\ref{utilreg:util} implies the inequality
$V(t,\ld;\, m^1_t)\leq U^1(t,\norm{c^1}_{\linf})$ and the
existence of a constant $D>0$ such that $\UU^1(c^1)\leq v(\ld)\leq
\ld\scl{\QQ}{e^1}+D$, for all $\ld>0$.

Assumption \ref{utilreg}.\ref{utilreg:inv}  ensures the existence
of a constant $C>\norm{c^1}_{\linf}$ such that
$I^1(t,C)<\frac{1}{2} \scl{\QQ}{e^1}$ for all $t\in [0,T]$. Then,
for all $\xi,\ld,\ld_0>0$ with the property that
$\ld>\ld_0>\max(C,U^1_x(t,\xi))$,  we have
\begin{equation}
    \nonumber
    \begin{split}
  V(t,\ld; \xi)& \geq V(t,\ld_0;\xi)+(\ld-\ld_0)
  V_{\ld}(t,\ld_0;\xi) \\
& = V(t,\ld_0; \infty)-(\ld-\ld_0) I^1(t,\ld_0) \geq V(t,\ld_0;
\infty)-\frac{1}{2}\scl{\QQ}{e^1}(\ld-\ld_0).
    \end{split}
\end{equation}
Therefore, if we let $L=  \myliminf_{\ld\to\infty}
\big(\frac{v(\ld)}{\ld}-\scl{\QQ}{e^1}\big)\in [-\infty,\infty]$,
we have
\begin{equation}
    \nonumber
    \begin{split}
L&=\myliminf_{\ld\to\infty}
  \frac{1}{\ld}\EE\int_0^T{V(t,\ld Q_t; m^1_t)}d\kappa_t
   \geq \myliminf_{\ld\to\infty}
  \frac{1}{\ld} \EE\int_0^T{V(t,\ld Q_t; m^1_t)
  \inds{Q_t> \frac{C}{\ld_0}}}d\kappa_t\\ & \hspace{2em} +
  \myliminf_{\ld\to\infty}
  \frac{1}{\ld} \EE\int_0^T{V(t,\ld Q_t; m^1_t)
  \inds{Q_t\leq  \frac{C}{\ld_0}}}d\kappa_t\\
&\geq \myliminf_{\ld\to\infty} \Big(
\frac{1}{\ld}\EE\int_0^T{V(t,\ld_0 Q_t;\, m^1_t)\inds{Q_t>
\frac{C}{\ld_0}}d\kappa_t-
\frac{1}{2\ld}\scl{\QQ}{e^1}(\ld-\ld_0)}\Big)\\
&\hspace{2em} + \myliminf_{\ld\to\infty} \frac{1}{\ld}
\EE\int_0^T{V(t,\ld Q_t; m^1_t)\inds{Q_t\leq
\frac{C}{\ld_0} }}d\kappa_t\\
&\geq -\frac{1}{2}\scl{\QQ}{e^1}+\myliminf_{\ld\to\infty}
\frac{1}{\ld}\EE\int_0^T{V(t,\ld Q_t; m^1_t)\inds{Q_t\leq
\frac{C}{\ld_0} }}d\kappa_t\geq -\frac{1}{2}\scl{\QQ}{e^1}.
    \end{split}
\end{equation}
Hence, $\lim_{\ld\to\infty} v(\ld)=\infty$ and there exists a
constant $\ld^1 \in [0,\infty)$ such that $v(\ld^1)=\UU^1(c^1)$,
i.e.,
\begin{equation}
    \nonumber
    \begin{split}
\EE\int_0^T U^1(t,c^1(t))\, d\kappa_t& =\EE\int_0^T\ld^1 Q_t
e^1_t\, d\kappa_t+\EE\int_0^T V(t,\ld^1
Q_t;\, m^1_t)\,d\kappa_t\\
&\geq \EE\int_0^T\ld^1 Q_t c^1_t\, d\kappa_t+\EE\int_0^T V(t,\ld^1
Q_t;\, m^1_t)\,d\kappa_t.
    \end{split}
\end{equation}
On the other hand, $U^1(t,x)\leq \ld^1 Q_t x+V(t,\ld^1 Q_t
;m^1_t)$ for all $t\in [0,T]$ and $x\in [0,m^1_t]$ (with equality
only for $x=\min(m^1_t,I^1(t,\ld^1 Q_t))$), so $c^1$ must be of
the form (\ref{formc}). To rule out the possibility $\ld^1=0$,
note that it would force $c^1=\Gamma^1$ and   violate the budget
constraint since $\Gamma^1>e^1$.
\end{proof}
\begin{proposition}
The process $Q$ has a modification which is a semimartingale, and
there exists a constant $\eps>0$ such that $\eps\leq Q \leq
1/\eps$.
\end{proposition}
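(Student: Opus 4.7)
My plan is to combine the explicit formula \eqref{formc} with the market-clearing identity $\sum_i c^i_t = e_t$ to realize $Q$ (up to modification) as a deterministic function of the semimartingales $e$ and suitable truncations of the $\Gamma^i$, and then invoke the characterization of semimartingale functions proved in Appendix \ref{sec:semi}.

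First I would establish the pointwise bounds on $Q$. Since $d\eps\leq e_t\leq d/\eps$ by Assumption \ref{ends}.\ref{ends:abs} and $\sum_i c^i_t = e_t$, at each $(t,\omega)$ some agent must have $c^i_t\geq \eps$, which forces $I^i(t,\ld^i Q_t)\geq\eps$; the uniform-in-$t$ Inada condition $I^i(t,y)\to 0$ as $y\to\infty$ from Assumption \ref{utilreg}.\ref{utilreg:inv} then yields a uniform upper bound on $Q$. For the lower bound, the strict inequality $\Gamma^i>e^i$ gives $\sum_i\Gamma^i_t > e_t$, so it is impossible to have $c^i_t=\Gamma^i_t$ simultaneously for all $i$; hence some index $j$ has $c^j_t = I^j(t,\ld^j Q_t) < \Gamma^j_t$. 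Since this $c^j_t\leq e_t\leq d/\eps$, the uniform-in-$t$ limit $I^j(t,y)\to\infty$ as $y\to 0$ produces a uniform positive lower bound. Replacing $\eps$ by the minimum of the two uniform constants gives $\eps\leq Q\leq 1/\eps$.

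Next, I would view $Q_t$ as the implicit solution of
\begin{equation}
\nonumber
\sum_{i=1}^d \min\bigl(\gamma^i,\, I^i(t,\ld^i q)\bigr) = y,
\end{equation}
with $y=e_t$ and $\gamma^i=\Gamma^i_t$. Because $Q\geq\eps$, each $I^i(t,\ld^i Q_t)$ is dominated by some constant $K_i$, so the equation is unaffected when $\Gamma^i$ is replaced by the truncation $\tilde\Gamma^i\triangleq \Gamma^i\wedge(K_i+1)$; by Assumption \ref{with}, $\tilde\Gamma^i$ is then a bounded semimartingale with $\NN(\tilde\Gamma^i)\in\linf$. Denoting by $\Phi(t,y,\bsy{\gamma})$ the implicit solution, the pointwise definition $\tilde Q_t\triangleq\Phi(t,e_t,\tilde\Gamma^1_t,\ldots,\tilde\Gamma^d_t)$ produces an optional modification of $Q$, and I would then apply the semimartingale-function characterization of Appendix \ref{sec:semi} to the composition to conclude that $\tilde Q$ is a semimartingale.

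The main obstacle is verifying that $\Phi$ satisfies the regularity hypothesis demanded by the semimartingale-function theorem. Implicit differentiation gives $C^1$-smoothness of $\Phi$ in any region where the active set $\{i:I^i(t,\ld^i q)<\gamma^i\}$ is locally constant, with Jacobian in the $q$-direction controlled below in absolute value by a strictly positive quantity (essentially the same monotonicity argument used for the lower bound on $Q$), and the locally convexity-Lipschitz time-dependence of $I^i$ in Assumption \ref{utilreg}.\ref{utilreg:inv} should transfer to $\Phi$. The delicate point is controlling $\Phi$ across the hypersurfaces $\{I^i=\gamma^i\}$ where the active set changes, and patching these regimes together to obtain the global (locally) convexity-Lipschitz regularity that the appendix result requires on the range of $(e,\tilde\Gamma^1,\ldots,\tilde\Gamma^d)$.
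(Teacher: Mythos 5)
Your boundedness argument is correct and in fact a bit more direct than the paper's (which derives $\eps\leq Q\leq 1/\eps$ from the representation $Q_t=\min_{\bb\in B}J^{\bb}(t,e_t-\Gamma^{\bb}_t)$ discussed below by plugging in $\bb=(1,\dots,1)$). The truncation of $\Gamma^i$ to $\tilde\Gamma^i=\Gamma^i\wedge(K_i+1)$ is also the right move and mirrors the truncation the paper performs in Proposition~\ref{hasdec}.

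The genuine gap is in the semimartingale step, and you have put your finger on it yourself in your last paragraph. Your plan is to write $Q_t=\Phi(t,e_t,\tilde\Gamma^1_t,\dots,\tilde\Gamma^d_t)$ and then apply the semimartingale-function machinery of Appendix~\ref{sec:semi}. But Theorem~\ref{semifun} and Proposition~\ref{invreg} are stated and proved only for functions $f:[0,T]\times I\to\R$ with $I\subseteq\R$ a \emph{one-dimensional} interval; there is no multivariate version in the paper, and producing one would require nontrivial new work (in particular an It\^o--Meyer argument in $d+1$ space variables together with the Fatou-compactness of Lemma~\ref{Fat} in that setting). On top of that, your implicit function $\Phi$ is only piecewise smooth: its $q$-derivative jumps across the moving hypersurfaces $\{(t,y,\bsy{\gamma}): I^i(t,\ld^i q)=\gamma^i\}$, and verifying that the time-dependence of the total variation of $\Phi_q$ is Lipschitz across those interfaces is exactly the ``delicate point'' you flag but do not resolve. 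So as written, the proposal does not close.

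The idea you are missing is the paper's reformulation, which collapses the problem to the univariate setting the appendix actually covers. From $e_t=\sum_i\min(\Gamma^i_t,I^i(t,\ld^i Q_t))$ and $(\kappa\otimes\PP)[\sum_i\Gamma^i>e]=1$, one gets
\[
e_t=\min_{\bb\in B}\Bigl(\sum_i b_i\,I^i(t,\ld^i Q_t)+\sum_i(1-b_i)\Gamma^i_t\Bigr),
\qquad B=\{0,1\}^d\setminus\{(0,\dots,0)\}.
\]
Setting $I^{\bb}(t,y)=\sum_i b_i I^i(t,\ld^i y)$ with inverse $J^{\bb}$, and $\Gamma^{\bb}_t=\sum_i(1-b_i)\Gamma^i_t$, one inverts the above equivalently into
\[
Q_t=\min_{\bb\in B}J^{\bb}\bigl(t,\,e_t-\Gamma^{\bb}_t\bigr).
\]
This expresses $Q$ as a \emph{finite minimum} of one-space-variable functions $J^{\bb}$ applied to the scalar semimartingales $e-\Gamma^{\bb}$ (after truncation exactly as you propose). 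Each $I^{\bb}$ inherits the local convexity-Lipschitz property from the $I^i$, so each $J^{\bb}$ is a semimartingale function by Proposition~\ref{invreg}; the semimartingale property then survives the finite minimum. The min-over-$\bb$ device is what replaces your active-set patching and removes the need for any multivariate version of Theorem~\ref{semifun}.
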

\begin{proof}
By Lemma \ref{hasform} there exists constants $\ld^i >  0$ such
that $e_t= \sum_i c^i_t=\sum_i \min(\Gamma^i_t, I^i(t,\ld^i
        Q_t))$, $\kappa\otimes\PP$-a.s.
Since $(\kappa\otimes\PP)[\sum_i \Gamma^i>e]=1$, we have
$e_t=\min_{\bb\in B} \big( \sum_i b_i I^i(t,\ld^i Q_t)+\sum_i
(1-b_i) \Gamma^i_t\big)$ ,where $B=\set{0,1}^d\setminus
\set{0,\dots,0}$.

For $\bb\in B$, the function $I^{\bb}$, defined by
$I^{\bb}(t,y)=\sum_i b_i I^i(t,\ld^i y)$, is strictly decreasing
in its second argument and shares the properties in Assumption
\ref{utilreg} with each $I^i$. Therefore, there exists a function
$J^{\bb}: [0,T]\times (0,\infty)\to (0,\infty)$  such that
$I^{\bb}(t,J^{\bb}(t,x)))=x$, for all $(t,x)\in [0,T]\times
(0,\infty)$. Thus, with $\Gamma^{\bb}_t=\sum_i (1-b_i)
\Gamma^i_t$, we have
\begin{equation}
    \nonumber
    \begin{split}
        e_t \geq x\ \Leftrightarrow
        I^{\bb}(t,Q_t)+\Gamma^{\bb}_t\geq x,\,\forall\,\bb\in B
        \Leftrightarrow
        Q_t \leq J^{\bb}(t,x-\Gamma^{\bb}_t),\,\forall\,\bb\in B,
    \end{split}
\end{equation}
with $J(t,x-\Gamma^{\bb}_t)=\infty$ for $x\leq \Gamma^{\bb}_t$.
Consequently, $Q_t=\min_{\bb\in B} J^{\bb}(t,e_t-\Gamma^{\bb}_t)$.
 Knowing that the semimartingale property is preserved under
 maximization,
it will be enough to prove that for each $\bb\in B$, $J^{\bb}$ is
a semimartingale function (see Definition \ref{defsemi}). By Inada
conditions (\ref{inadas}) - holding uniformly in $t\in [0,T]$ -
$I^{\bb}$ maps compact sets of the form $[0,T]\times [y_1,y_2]$
into compact intervals. The function $I^{\bb}$ is locally
convexity-Lipschitz, so the conclusion that $Q$ is a
semimartingale follows from Proposition \ref{invreg}.

To show boundedness, we first set  $\bb_1=(1,\dots,1)$ to conclude
that $Q_t\leq
J^{\bb_1}(t,e_t-\Gamma^{\bb_1}_t)=J^{\bb_1}(t,e_t)\in\linf$. On
the other hand, $Q_t=\min_{\bb\in B}
J^{\bb}(t,e_t-\Gamma^{\bb}_t)\geq \min_{\bb\in B} J^{\bb}(t,e_t)$
- a positive quantity, uniformly bounded from below. Therefore,
the semimartingale $Q_t$ is positive and uniformly bounded from
above and away from zero.

\end{proof}

\begin{proposition} \label{hasdec}
The process $Q$ admits a multiplicative decomposition $Q=\hat{Q}
\beta$ where $\hat{Q}$ is a strictly positive uniformly integrable
martingale, and $\beta$ is a strictly positive \cd predictable
process of finite variation.
\end{proposition}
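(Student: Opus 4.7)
By the previous proposition, $Q$ is a bounded semimartingale with $\eps\le Q\le 1/\eps$ for some $\eps>0$, so in particular $Q$ is special. Write its canonical additive decomposition
\[
Q = Q_0 + M + A,
\]
where $M$ is a local martingale and $A$ is a predictable process of finite variation, both null at zero.

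Since $Q_-$ is bounded and bounded away from zero, the processes $\tilde M := \int_0^{\cdot} Q_{s-}^{-1}\, dM_s$ and $\tilde A := \int_0^{\cdot} Q_{s-}^{-1}\, dA_s$ are well defined; $\tilde M$ is a local martingale and $\tilde A$ is predictable of finite variation. I would then set
\[
\hat Q := Q_0\,\mathcal E(\tilde M),\qquad \beta := \mathcal E(\tilde A),
\]
and verify $Q = \hat Q\,\beta$ by Yor's formula (equivalently, by checking that $\hat Q\beta$ satisfies the same linear SDE as $Q$, noting that $Q_{s-}^{-1}\,d Q_s = d\tilde M_s + d\tilde A_s$). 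Strict positivity of both factors follows from $\Delta Q > -Q_-$: the predictable FV part of the canonical decomposition absorbs exactly the predictable jumps of $Q$, so $\Delta \tilde A > -1$ and $\Delta \tilde M > -1$ simultaneously, and hence $\mathcal E(\tilde A)>0$ and $\mathcal E(\tilde M)>0$.

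The substantive step — and the main obstacle — is promoting the local martingale $\hat Q$ to a uniformly integrable martingale. This is precisely the place at which the Mémin–Shiryaev-based machinery developed in Appendix \ref{sec:semi} is meant to be used: it provides a sufficient condition, formulated in terms of $\NN(Q)\in\linf$, for the local martingale part of the multiplicative decomposition of a bounded strictly positive semimartingale to be a true (UI) martingale. The task thus reduces to establishing $\NN(Q)\in\linf$. For this I would use the explicit representation
\[
Q_t \;=\; \min_{\bb\in B} J^{\bb}\!\bigl(t, e_t - \Gamma^{\bb}_t\bigr)
\]
obtained in the preceding proposition, together with Assumptions \ref{ends}.\ref{ends:semi} and \ref{with}, which supply the bounds $\NN(e^i)\in\linf$ and $\NN(\min(\Gamma^i,C))\in\linf$ (boundedness of $Q$ renders truncation of $\Gamma^i$ harmless here), and the local convexity-Lipschitz regularity of $J^{\bb}$. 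The semimartingale-function results of Appendix \ref{sec:semi} then transport the $\NN$-bound through each $J^{\bb}$, and finally through the finite pointwise minimum, giving $\NN(Q)\in\linf$. Feeding this into the Mémin–Shiryaev criterion yields that $\hat Q$ is a uniformly integrable martingale and completes the proof.
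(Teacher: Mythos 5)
Your overall strategy matches the paper's: reduce the problem to establishing $\NN(Q)\in\linf$ via the representation $Q_t=\min_{\bb\in B}J^{\bb}(t,e_t-\Gamma^{\bb}_t)$, using the boundedness of $Q$ to truncate the argument onto a compact set (so that only $\min(\Gamma^i,C)$ enters and Assumptions \ref{ends}.\ref{ends:semi} and \ref{with} suffice), propagate the $\NN$-bound through the locally convexity-Lipschitz functions $J^{\bb}$ and the finite minimum via Propositions \ref{regissemi} and \ref{stab}, and then invoke the M\'emin--Shiryaev criterion packaged in Proposition \ref{Decomp}. That reduction is exactly what the paper does.

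The explicit factorization you write down in the first half, however, is not correct as stated. With $\tilde M_t=\int_0^t Q_{s-}^{-1}\,dM_s$ and $\tilde A_t=\int_0^t Q_{s-}^{-1}\,dA_s$, Yor's formula gives $Q_0\,\mathcal E(\tilde M)\,\mathcal E(\tilde A)=Q_0\,\mathcal E\bigl(\tilde M+\tilde A+[\tilde M,\tilde A]\bigr)$, whereas $Q=Q_0\,\mathcal E(\tilde M+\tilde A)$; the cross-variation $[\tilde M,\tilde A]=\sum_s\Delta\tilde M_s\,\Delta\tilde A_s$ is \emph{not} zero in general, because at a predictable jump time $\tau$ one has $\Delta A_\tau={}^p(\Delta Q)_\tau$ and $\Delta M_\tau=\Delta Q_\tau-\Delta A_\tau$, and both are typically nonzero. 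In particular, your remark that ``the predictable FV part absorbs exactly the predictable jumps of $Q$'' is too strong: $M$ genuinely jumps at the predictable jump times of $A$. This is precisely why the multiplicative decomposition of Jacod--Shiryaev (Theorem 8.21 in \cite{JacShi03}), used in Proposition \ref{Decomp}, normalizes by the predictable projection ${}^pQ=Q_-+\Delta A$ rather than by $Q_-$: with $H=1/{}^pQ$, $\hat M=\int H\,dM$, $\hat A=-\int H\,dA$, one gets $Q=Q_0\,\mathcal E(\hat M)/\mathcal E(\hat A)$. Since you ultimately appeal to Proposition \ref{Decomp} anyway — which already produces the correct factors and the UI-martingale property in one stroke — this slip does not derail the proof plan; but the explicit stochastic-exponential formulas should simply be dropped in favor of citing that proposition.
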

\begin{proof}
By the representation $Q_t=\min_{\bb\in B}
J^{\bb}(t,e_t-\Gamma^{\bb}_t)$, and boundedness of $Q$ from above,
there exists a constant $C>0$ such that $Q_t=\min_{\bb\in B}
J^{\bb}(t,\max(C,e_t-\Gamma^{\bb}_t))$. Propositions
\ref{regissemi}, \ref{stab} and \ref{Decomp}
 complete the proof.
\end{proof}

\paragraph{\em Construction of the equilibrium market.}
Thanks to  Proposition \ref{hasdec}, there exists a measure $\hQQ$
(with $\frac{d\hQQ}{d\PP}=\frac{\hat{Q}_T}{\EE[\hat{Q}_T]}$)
equivalent to $\PP$ such that
\begin{equation}
    \label{defhq}
    \begin{split}
        Q_t=\EE[\frac{d\hQQ}{d\PP}|\FF_t] \beta_t,\ \text{and}\
        \scl{\QQ}{c}=\EE\int_0^T Q_u c_u \, d\kappa_u
        = \EE^{\hQQ}\int_0^T c_u\beta_u\, d\kappa_u.
    \end{split}
\end{equation}
In words, the action of the pricing functional $Q$ on a
consumption stream $c$ can be represented as a $\hQQ$-expectation
of a discounted version $c(u)\beta_u$ of $c$.

Let $n\in\N$ be the martingale multiplicity of the filtration
$\prf{\FF_t}$ under $\hQQ$, and let $(Y_1,\dots,Y_n)$ be an
$n$-dimensional positive $\hQQ$-martingale described in Definition
\ref{frp}. Define the riskless asset $B$ and the stock price
process $S=(S_1,\dots, S_n)$ as follows
\begin{equation}
    \label{defeq}
    \begin{split}
  B(t)=1/\beta(t), S_j(t)=B(t) Y_j(t),\ t\in [0,T],\, j=1,\dots, n.
    \end{split}
\end{equation}
\vspace{-4ex}
\begin{lemma} The pair $(S, B)$, defined in (\ref{defeq}) is an
equilibrium market.
\end{lemma}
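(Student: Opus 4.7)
The plan is to reuse the abstract-equilibrium consumption processes $c^i$ and build the portfolio processes $H^i$ via the martingale representation property of $\prf{\FF_t}$ under $\hQQ$. The key observation is that, by construction, $S_j/B=Y_j$ is a $\hQQ$-martingale for each $j$, so $\hQQ$ plays the role of an equivalent martingale measure for the discounted market, and the identity in (\ref{defhq}) identifies $\scl{\QQ}{c}$ with $\EE^{\hQQ}[\int_0^T c_u\beta_u\,d\kappa_u]$.

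First, I would verify that each agent's abstract budget binds: $\scl{\QQ}{c^i}=\scl{\QQ}{e^i}$. This is immediate from the budget inequality $\scl{\QQ}{c^i}\leq\scl{\QQ}{e^i}$ combined with market clearing $\sum_i c^i=\sum_i e^i$. Next, introduce the uniformly integrable $\hQQ$-martingale
\[
M^i_t := \EE^{\hQQ}\!\left[\int_0^T (e^i_u-c^i_u)\beta_u\,d\kappa_u\,\Big|\,\FF_t\right],
\]
which starts at $M^i_0=\scl{\QQ}{e^i-c^i}=0$. Invoking Assumption \ref{ftp} (extending from bounded to uniformly integrable martingales by a standard localization/density argument) yields predictable, $Y$-integrable processes $K^{i,j}$ with $M^i_t=\sum_{j=1}^n \int_0^t K^{i,j}_u\,dY^j_u$, and I set $H^{i,j}:=-K^{i,j}$.

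Next I would verify the conditions of Definition \ref{afford} for $(H^i,c^i)$. A standard numéraire-change computation gives the discounted wealth equation $d\hat X^i_t=H^i_t\,dY_t+(e^i_t-c^i_t)\beta_t\,d\kappa_t$, which rearranges to
\[
\hat X^i_t=-M^i_t+\int_0^t(e^i-c^i)\beta\,d\kappa = \EE^{\hQQ}\!\left[\int_t^T(c^i_u-e^i_u)\beta_u\,d\kappa_u\,\Big|\,\FF_t\right],
\]
so $\hat X^i_T=0$ and hence $X^i_T=0\geq 0$; the cap constraint $c^i\leq \Gamma^i$ is inherited from $c^i\in\AA^i$; and the lower-bound admissibility $a+\int_0^t H^i\,dS\geq 0$ is extracted from this wealth representation together with positivity of $B$ and $B_-$. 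Market clearing $\sum_i H^i\equiv 0$ follows because $\sum_i M^i\equiv 0$ (by $\sum_i c^i=\sum_i e^i$) and pairwise orthogonality of $Y^1,\dots,Y^n$ forces uniqueness of the representation integrands via $K^{i,j}\,d[Y^j,Y^j]=d[M^i,Y^j]$, making $\sum_i K^{i,j}=0$.

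Finally, for optimality, let $(H,c)$ be any $(S,B,e^i,\Gamma^i)$-affordable strategy. The lower-bound admissibility of $H$ and the $\hQQ$-martingale property of $Y$ make $\int_0^\cdot H\,dY$ a $\hQQ$-local martingale bounded from below, hence a $\hQQ$-supermartingale. Taking $\hQQ$-expectations in $0\leq \hat X_T=\int_0^T H\,dY+\int_0^T(e^i-c)\beta\,d\kappa$ yields $\scl{\QQ}{c}\leq\scl{\QQ}{e^i}$, so $c\in B^i(\QQ)$, and the abstract equilibrium gives $\UU^i(c^i)\geq\UU^i(c)$. The main obstacle I expect is the lower-bound admissibility verification: since $B$ is only predictable of finite variation (potentially with jumps) and $\beta$ need not be bounded, the integration-by-parts identity $dS_j=B_-\,dY^j+Y^j_-\,dB+d[B,Y^j]$ introduces terms that must be controlled when passing between $\int H\,dS$ and $\int H\,dY$, and $M^i$ may itself need localization if it is not a bounded $\hQQ$-martingale.
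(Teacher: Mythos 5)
Your route is essentially the paper's: pass to the num\'eraire $B$, represent the discounted net consumption stream via the finite representation property under $\hQQ$, define the portfolio processes from the representation integrands, and transfer the budget inequality from the stochastic to the abstract world via the supermartingale property of $\int H\,dY$ (this supermartingale observation is a clean way to make precise a step the paper only sketches). Two steps, however, have genuine gaps.

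The first is market clearing. You argue that orthogonality of $Y^1,\dots,Y^n$ together with $\sum_i M^i\equiv 0$ forces $\sum_i K^{i,j}=0$, and conclude $\sum_i H^i\equiv 0$. But representation integrands are determined only $d[Y^j,Y^j]$-a.e., so what you actually get is $\sum_i H^i_j(t)=0$ for $d[Y^j,Y^j]$-a.e.\ $t$, not for \emph{all} $t\in[0,T]$ a.s., which is what part 1.\ of Definition~\ref{defequ} requires. The paper repairs this by replacing $\tilde H^i_j$ with $H^i_j(t)=\tilde H^i_j(t)\inds{\sum_i \tilde H^i_j(t)=0}$: the modification lives on a $d[Y^j,Y^j]$-null set, so the wealth processes remain indistinguishable, while clearing now holds identically. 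You need this or an equivalent adjustment.

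The second is the final optimality step. After deriving $\scl{\QQ}{c}\leq\scl{\QQ}{e^i}$ for an affordable $(H,c)$ you assert $c\in B^i(\QQ)$ and invoke the abstract equilibrium. But $B^i(\QQ)\subseteq\linf_+$ by definition, whereas an affordable $c$ (Definition~\ref{afford}) is only a nonnegative optional process and need not be bounded; the abstract optimality $\UU^i(c^i)\geq\UU^i(c')$ is guaranteed only for $c'\in B^i(\QQ)$. The paper fills this gap by truncating: $c\wedge k\in B^i(\QQ)$ for each $k\in\N$, hence $\UU^i(c^i)\geq\UU^i(c\wedge k)$, and monotone convergence gives $\UU^i(c^i)\geq\UU^i(c)$. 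Without the truncation, the comparison is not justified for unbounded competitors.
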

\begin{proof}
Let $(\QQ,\cii)$ be the abstract equilibrium which produced
$(S,B)$, and let the measure $\hQQ$ be as in (\ref{defhq}). For
$i=1,\dots, d$, define the $\hQQ$-martingale $\tilde{X}^i$ by
$\tilde{X}^i_t=\EE^{\hQQ}[\int_0^T (c^i(u)-e^i(u))\beta_u \,
du|\FF_t]$. By the finite representation property (Assumption
\ref{ftp}), for each $i=1,\dots, d$ there exists an $S$-integrable
portfolio process $H^i$ such that
$\tilde{X}^i_t=\tilde{X}^i_0+\int_0^t \tilde{H}^i\, dS_u$.
Moreover, the boundedness of processes $c^i$ and $e^i$ guarantees
that $\tilde{H}^i$ satisfies part 1. of Definition \ref{afford}.
Standard calculations involving integration by parts and using the
fact that $B$ is a predictable process of finite variation imply
that the wealth process $X^{\tilde{H}^i,c^i,e^i}$ defined as in
(\ref{wealth}) is bounded and satisfies
$X^{\tilde{H}^i,c^i,e^i}_T\geq 0$. Therefore, $(\tilde{H}^i,c^i)$
is an affordable consumption-investment strategy (as described in
Definition \ref{afford}).

Since $\sum_{i} \tilde{X}^i=0$, the mutual orthogonality of the
$\hQQ$-martingales $Y_1,\dots, Y_n$ implies that $\sum_{i=1}^d
\tilde{H}^i_j(t)=0,\ d[Y_j,Y_j]_t-\text{a.e.}$, for all $j$. In
order to have markets clear for {\em every} $t\in [0,T]$, we
define the portfolio process $H^i=(H^i_1,\dots, H^i_n)$ by
$H^i_j(t)=\tilde{H}^i_j(t)\inds{\sum_i H^i_j(t) = 0}$, for each
$i=1,\dots, d$, so that
\begin{enumerate}
\item $H^i_j(t)=\tilde{H}^i_j(t)$, $d[Y_j,Y_j]$-a.e. (implying
indistinguishability of the wealth processes $X^{H^i,c^i,e^i}$ and
$X^{\tilde{H}^i,c^i,e^i}$)
 and
 \item $\sum_{i} H^i_j(t)=0$, for all $t$, a.s. and all $j=1,\dots, n$.
 \end{enumerate}
 Therefore, the $d$-tuple $(H^i,c^i)$ satisfies the part 1. of
 Definition \ref{defequ}.

 It remains to show that $c^i$ maximizes $\UU^i$ over all
 consumption process $c'$ with
 $\UU^i(c')\in (-\infty,\infty)$ for which there exists a
 portfolio process $H'$ such that $(H',c')$ is
 $(S,B,e^i,\Gamma^i)$-affordable.
 We first note that each such $c'$ satisfies
 $\scl{\QQ}{c'}\leq \scl{\QQ}{e^i}$.
 This is due to (\ref{defhq}) and the fact that the discounted wealth
 $X'=\beta X^{H',c',e^i}$
 (which satisfies $X'_T\geq 0$) can
 be represented as a sum of a $\hQQ$-martingale and a term of the
 form $\int_0^t \beta_u (c'(u)-e^i(u))\, d\kappa_u$.
 Finally, because $c'\wedge k \in B^i(\QQ)$, for any $k\in\N$,
 the properties of the abstract equilibrium imply that
 $\UU^i(c^i)\geq \UU^i(c'\wedge k)$ and
 the Monotone Convergence Theorem yields $\UU^i(c^i)\geq \UU^i(c')$.
\end{proof}
\begin{theorem}\label{main} Suppose that
\begin{enumerate}
    \item $(\Omega, \FF, \prf{\FF},\PP)$ is a filtered
    probability space satisfying
Assumption~\ref{ftp},
    \item $(e^i)_{i=1,\dots,d}$ are random endowment
    processes verifying Assumption~\ref{ends},
    \item $(U^i)_{i=1,\dots,d}$ are utility functions for
    which Assumption \ref{utilreg} is valid, and
    \item $(\Gamma^i)_{i=1,\dots,d}$ are withdrawal cap
    processes satisfying Assumption \ref{with}.
\end{enumerate}
Then there exist an equilibrium market $(S,B)$ consisting of a
finite-dimensional semimartingale risky-asset process $S$ and a
positive predictable riskless-asset process $B$ of finite
variation for which the following additional properties hold
\begin{enumerate}
    \item The market $(S,B)$ is arbitrage free, i.e., there
    exists a unique measure $\hat{\QQ}$ equivalent to $\PP$, such that
    the discounted prices $S/B$ of risky assets are
    $\hat{\QQ}$-martingales.
    \item The optimal consumption densities $c^i$ in
    the market $(S,B)$ are uniformly bounded from above.
\end{enumerate}
\end{theorem}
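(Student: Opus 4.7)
The plan is to assemble the ingredients that have already been established in the paper and then verify the two additional properties. First, I would invoke Theorem~\ref{exabs} to obtain an abstract equilibrium $(\QQ,\cii)$; Lemma~\ref{countadd} upgrades $\QQ$ to a countably additive probability equivalent to $\kappa\otimes\PP$, whose optional density I denote by $Q$. Lemma~\ref{hasform} then fixes the form $c^i_t=\min(\Gamma^i_t,I^i(t,\ld^i Q_t))$ for each agent, and Propositions~\ref{hasdec} (together with the preceding proposition) yields that $Q$ is a bounded, bounded-away-from-zero semimartingale with a multiplicative decomposition $Q=\hQ\beta$ in which $\hQ$ is a uniformly integrable martingale and $\beta$ is a strictly positive predictable \cd process of finite variation. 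Defining $B=1/\beta$ and $S_j=BY_j$ as in \eqref{defeq} (where $(Y_1,\ldots,Y_n)$ is furnished by Assumption~\ref{ftp} applied under $\hQQ$) and appealing to the unnamed lemma just proved, I obtain an equilibrium market in the sense of Definition~\ref{defequ}.

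It remains to verify the two additional properties. For the absence of arbitrage and uniqueness of $\hQQ$, I would observe that by construction the discounted prices $S_j/B=Y_j$ are $\hQQ$-martingales, so $\hQQ$ is an equivalent martingale measure. Uniqueness is then a direct consequence of the finite representation property: for any other equivalent martingale measure $\hQQ'$, the density process of $\hQQ'$ with respect to $\hQQ$ is a bounded $\hQQ$-martingale, hence by Definition~\ref{frp}(2) admits a stochastic integral representation against $(Y_1,\ldots,Y_n)$; but since each $Y_j$ is also a $\hQQ'$-martingale, this integral must be constant, forcing $\hQQ'=\hQQ$. Equivalently, the martingale representation produces completeness of $(S,B)$, and in a complete arbitrage-free market the equivalent martingale measure is unique.

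For uniform boundedness of the optimal consumption densities, I would combine the form $c^i_t=\min(\Gamma^i_t,I^i(t,\ld^i Q_t))$ from Lemma~\ref{hasform} with the uniform bound $Q\geq\eps>0$ from the preceding proposition. Then $\ld^i Q_t\geq \ld^i\eps>0$ uniformly in $(t,\omega)$, and by Assumption~\ref{utilreg}.\ref{utilreg:inv} the function $I^i(t,\cdot)$ is decreasing and continuous with $I^i(t,y)\to 0$ as $y\to\infty$ uniformly in $t$; in particular $I^i$ is bounded on $[0,T]\times[\ld^i\eps,\infty)$, which yields a deterministic upper bound on $I^i(t,\ld^i Q_t)$ and hence on $c^i_t$.

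I do not anticipate a single hard step here, since the substantive analytic work has been carried out in Theorem~\ref{exabs}, Lemmas~\ref{countadd}--\ref{hasform}, and the two propositions of Section~\ref{sec:abstosto}; the main content of the theorem is the assembly itself. The only nontrivial verification is the uniqueness of $\hQQ$, where care must be taken to apply the finite representation property (as opposed to merely a predictable representation under $\PP$), and this is precisely why Assumption~\ref{ftp} is formulated for \emph{every} equivalent probability rather than only for $\PP$.
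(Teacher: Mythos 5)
Your proposal correctly assembles the preceding results of Section~\ref{sec:abstosto}, which is exactly how the paper intends Theorem~\ref{main} to be read — the theorem carries no separate proof in the paper and is a summary of Theorem~\ref{exabs}, Lemmas~\ref{countadd}--\ref{hasform}, the two Propositions, and the unnamed lemma establishing that $(S,B)$ from~\eqref{defeq} is an equilibrium market. Both of the additional claims are handled correctly, and the derivation of the uniform bound on $c^i$ from $Q\geq\eps$ together with the uniform Inada condition in Assumption~\ref{utilreg}.\ref{utilreg:inv} is precisely the reason the paper proves lower boundedness of $Q$ in the unnamed Proposition.

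One small slip worth flagging: the density process of an arbitrary equivalent martingale measure $\hQQ'$ with respect to $\hQQ$ is a positive $\hQQ$-martingale but is \emph{not} bounded in general, so one cannot apply Definition~\ref{frp}(2) to it directly. The correct route is the one you call ``equivalent'': Definition~\ref{frp}, applied under $\hQQ$, yields that every bounded $\hQQ$-martingale $M_t=\EE^{\hQQ}[\ind A\mid\FF_t]$ has a representation $M_0+\sum_j\int_0^t H^j\,dY^j$; the integral is bounded, hence a true $\hQQ'$-martingale for any competing EMM $\hQQ'$ (under which the $Y^j$ are local martingales), forcing $\hQQ'[A]=\hQQ[A]$ for every $A\in\FF$ and therefore $\hQQ'=\hQQ$. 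With that correction, the argument is complete and coincides with the paper's.
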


\appendix
\section{Semimartingale functions and multiplicative
de\-com\-po\-si\-ti\-ons} \label{sec:semi} In this section we
provide several results which give sufficient conditions for 1) a
process obtained by applying a function to a semimartingale to be
a semimartingale, and 2) for a local martingale part in a
multiplicative decomposition of a positive process to be a
uniformly integrable martingale. These results can be improved in
several directions; we are aiming for conditions easily verifiable
in practice. In what follows, $I$ and $J$ will denote generic open
intervals in $\R$. For a process $A$ of finite variation,
$\abs{A}=\prf{\abs{A}_t}$ will denote its total variation process.
\paragraph{\em Semimartingale functions.}\label{sub:semi}
\begin{definition}
A function \label{defsemi} $f:[0,T]\times I\to\R$ is called a {\bf
semimartingale function} if the process $Y$ defined by
$Y_t=f(t,X_t)$, $t\in [0,T]$ is a \cd semimartingale for each
semimartingale $X$ taking values in $I$ and defined on an
arbitrary filtered probability space $(\Omega, \FF, \prf{\FF_t},
\PP)$.
\end{definition}

In this section we provide a set of sufficient conditions for a
function $f:[0,T]\times I\to\R$ to be a semimartingale function.
We go beyond  basic $C^{1,2}$-differentiability required by the
It\^ o formula and place much less restrictive assumptions on $f$.
Apart from being indispensable in Section \ref{sec:abstosto}, we
hope that the obtained result holds some independent probabilistic
interest.

\begin{theorem}
\label{semifun} Suppose that a function $f:[0,T]\times I\to\R$ can
be represented as $f(t,x)=f^1(t,x)-f^2(t,x)$, where for $i=1,2$,
\begin{enumerate}
\item $f^i$ is Lipschitz in the time variable, uniformly for $x$
in compact intervals. \item $f^i$ is convex in the second
variable. \item  The right derivative $f^i_{x+}$ is bounded on
compact subsets of $[0,T]\times I$ and satisfies
$f^i_{x+}(t,x)=\lim f^i_{x+}(s,x')$, when $(s,x')\to (t,x)$ and
$x'\geq x$.
\end{enumerate}
Then $f$ is a semimartingale function. Moreover, for a
semimartingale $X$ the local martingale part $\tilde{M}$ in the
semimartingale decomposition of
$f(t,X_t)=f(0,X_0)+\tilde{M}_t+\tilde{A}_t$ is given by
$\tilde{M}_t=\int_0^t f_{x+}(s,X_{s-})\, dM_s$, where $M$ is the
local martingale part in the semimartingale decomposition
$X_t=X_0+M_t+A_t$.
\end{theorem}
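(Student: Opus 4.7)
The plan is to reduce the theorem by linearity to a single convex piece, mollify in both variables, and pass to the limit in the classical It\^o formula. Since semimartingales form a vector space and the claimed decomposition of $\tilde{M}$ is additive in $f$, it is enough to assume $f^2 \equiv 0$, so that $f = f^1$ is itself convex in $x$, Lipschitz in $t$ uniformly on compacts of $x$, and with $f_{x+}$ bounded on compacts and satisfying the stated one-sided right-continuity.

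Fix a semimartingale $X$ taking values in $I$ with decomposition $X = X_0 + M + A$. Localize by stopping times $\tau_n = \inf\{t : X_t \notin K_n\}\wedge T$ for an exhausting family of compact intervals $K_n \subset I$; on $[0,\tau_n]$ the process $X$ lives in a fixed compact $K \subset I$. Extend $f$ to $[0,T]\times\R$ by affine extrapolation in $x$ outside $K$ (which preserves convexity and the time-Lipschitz bound) and by constants in $t$ outside $[0,T]$, and then convolve with a smooth mollifier $\phi_\eps$ in both variables to obtain $f_\eps \in C^\infty$. The classical It\^o formula applied to $f_\eps(t,X_t)$ gives
\begin{equation*}
f_\eps(t,X_t) = f_\eps(0,X_0) + T^\eps_t + \int_0^t \partial_x f_\eps(s,X_{s-})\,dX_s + Q^\eps_t + J^\eps_t,
\end{equation*}
where $T^\eps_t = \int_0^t \partial_s f_\eps(s,X_{s-})\,ds$, $Q^\eps_t = \tfrac12\int_0^t \partial_{xx} f_\eps(s,X_{s-})\,d[X]^c_s \geq 0$, and $J^\eps_t = \sum_{s\leq t}\bigl(f_\eps(s,X_s) - f_\eps(s,X_{s-}) - \partial_x f_\eps(s,X_{s-})\Delta X_s\bigr) \geq 0$ by convexity of $f_\eps$ in $x$.

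Now let $\eps \downarrow 0$. The left-hand side converges pointwise to $f(t,X_t)$. The term $T^\eps$ is uniformly Lipschitz in $t$ with constant $C$ (by the time-Lipschitz assumption), so a Helly-type selection gives a subsequence converging to a Lipschitz, hence BV, process. The assumption on the one-sided limit of $f_{x+}$ is exactly what makes $\partial_x f_\eps(s,X_{s-}) \to f_{x+}(s,X_{s-})$ (the mollified derivative at $(s,X_{s-})$ is an average of $f_{x+}$ over nearby $(s',x')$, which can be arranged to approach from the right in $x$); bounded convergence for stochastic integrals then gives $\int_0^\cdot \partial_x f_\eps(s,X_{s-})\,dX_s \to \int_0^\cdot f_{x+}(s,X_{s-})\,dX_s$ uniformly on compacts in probability. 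Since $Q^\eps + J^\eps \geq 0$ is nondecreasing in $t$ and every other piece of the identity has a limit, the sum $Q^\eps + J^\eps$ is forced to converge to a nondecreasing limit $V$.

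Combining the limits yields $f(t,X_t) = f(0,X_0) + \int_0^t f_{x+}(s,X_{s-})\,dX_s + V_t + (\text{BV limit of } T^\eps)$, which proves that $f$ is a semimartingale function. For the decomposition, boundedness of $f_{x+}$ on the localized range lets one split $\int_0^t f_{x+}(s,X_{s-})\,dX_s = \int_0^t f_{x+}(s,X_{s-})\,dM_s + \int_0^t f_{x+}(s,X_{s-})\,dA_s$: the first summand is a local martingale, the second of finite variation and absorbed into the drift, so $\tilde{M}_t = \int_0^t f_{x+}(s,X_{s-})\,dM_s$ as claimed. The main obstacle is the joint passage to the limit in $Q^\eps$ and $J^\eps$: individually these two processes encode the $x$-second-derivative measure of $f$, and which of them carries its mass depends on $\eps$ versus the jump sizes of $X$, so neither piece need converge separately; but their sum is nondecreasing, so once the other pieces have been identified, monotonicity in $t$ and boundedness of the LHS force convergence of $Q^\eps + J^\eps$ to a nondecreasing process, which is all that is needed.
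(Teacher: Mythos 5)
Your strategy coincides with the paper's: localize, mollify with a one-sided kernel, apply It\^o to the smooth approximants, and pass to the limit. The reduction to a single convex piece is fine (the hypotheses are placed on each $f^i$ separately and the claimed formula for $\tilde M$ is linear in $f$), and your treatment of the stochastic-integral term via dominated convergence for stochastic integrals is correct. The gap is in the treatment of the time-derivative term $T^\eps$. You invoke a ``Helly-type selection'' to extract a convergent subsequence of $T^\eps$, but $T^\eps_t(\omega)=\int_0^t \partial_s f_\eps(s,X_{s-}(\omega))\,ds$ is a \emph{random} process: Arzel\`a--Ascoli or Helly, applied pathwise, produce a subsequence that depends on $\omega$, which does not yield a limiting \emph{process}. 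The mollified time derivatives $\partial_s f_\eps$ need not converge pointwise (only a.e.\ in $(t,x)$, which does not control the values along the random path $(s,X_{s-})$), so there is no direct argument that $T^\eps$ converges. This is precisely where the paper is forced to deviate from the naive limit passage: it invokes a Koml\'os-type compactness result (Lemma~\ref{Fat}, part 2) to pass to convex combinations that Fatou-converge, and it does the same for the nondecreasing pieces $Q^\eps+J^\eps$ separately (Lemma~\ref{Fat}, part 1). Since your subsequent claim that $Q^\eps+J^\eps$ is ``forced to converge'' rests on the assumed convergence of $T^\eps$, the chain breaks as written.

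The argument is repairable, and in a way that is arguably cleaner than the paper's. Do not try to make $T^\eps$ and $Q^\eps+J^\eps$ converge separately. Instead observe that, after passing to a subsequence along which the stochastic integral converges a.s.\ uniformly on compacts, the combination
\[
R^\eps_t := T^\eps_t+Q^\eps_t+J^\eps_t
  = f_\eps(t,X_t)-f_\eps(0,X_0)-\int_0^t \partial_x f_\eps(s,X_{s-})\,dX_s
\]
converges a.s.\ uniformly on compacts to the c\`adl\`ag adapted process $R_t=f(t,X_t)-f(0,X_0)-\int_0^t f_{x+}(s,X_{s-})\,dX_s$. For $s<t$ one has $R^\eps_t-R^\eps_s\geq (T^\eps_t-T^\eps_s)\geq -C(t-s)$ uniformly in $\eps$, so $R_t+Ct$ is nondecreasing; hence $R$ has paths of finite variation, and no Koml\'os/Helly selection for $T^\eps$ or $Q^\eps+J^\eps$ individually is needed. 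This yields the semimartingale property, and the decomposition $\tilde M_t=\int_0^t f_{x+}(s,X_{s-})\,dM_s$ then follows exactly as you describe by splitting $dX=dM+dA$. With this repair your proof becomes correct, and it avoids the convex-combination machinery used in the paper.
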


Before delving into the proof of Theorem \ref{semifun}, we recall
the concept of Fatou-convergence and some useful compactness-type
results related to it.

\begin{definition}
A sequence $(X^n)_{n\in\N}$ of \cd adapted processes is said to
{\bf Fatou-converge} towards a \cd adapted process $X$ if
\begin{equation}
    \nonumber
    \begin{split}
        X_t=\lim_{q\searrow t} \lim_n X^n_q,\, a.s.,
        \text{\ for all $t\in [0,T)$, and}\ \ X_T=\lim_n X^n_T,\,
         a.s,
    \end{split}
\end{equation}
where the first limit is taken over rational numbers $q>t$.
\end{definition}

\begin{lemma}\ \label{Fat}\begin{enumerate}
    \item Let $(A^n)_{n\in\N}$ be a sequence of non-decreasing
    adapted \cd processes
    taking values in $[0,\infty)$.
Then there exists a sequence $(\tilde{A}^n)_{n\in\N}$ of convex
combinations $\tilde{A}^n\in\conv(A^n, A^{n+1},\dots)$ and a
non-decreasing \cd process $\tilde{A}$ taking values in
$[0,\infty]$ such that $\tilde{A}^n$ Fatou-converges to
$\tilde{A}$.
    \item Let $(A^n)_{n\in\N}$ be a sequence of finite-variation
    \cd processes on $[0,T]$, with
    uniformly bounded total variations, i.e.,
\begin{equation}
    \nonumber
    \begin{split}
\text{$\abs{A^n}_T\leq C$ a.s., for some constant $C>0$ and all
$n\in\N$.}
    \end{split}
\end{equation}
 Then there exists a sequence
$(\tilde{A}^n)_{n\in\N}$ of convex combinations
$\tilde{A}^n\in\conv(A^n, A^{n+1},\dots)$ and a \cd process
$\tilde{A}$ of finite variation with $|\tilde{A}|_T\leq C$ such
that $\tilde{A}^n$ Fatou-converge towards $\tilde{A}$.
\end{enumerate}
\end{lemma}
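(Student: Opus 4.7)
The plan is to reduce Part 2 to Part 1 via the Jordan decomposition of càdlàg finite-variation processes: write $A^n = A^{n,+} - A^{n,-}$, where $A^{n,\pm}$ are the positive and negative variation processes of $A^n$. Both are non-decreasing, càdlàg, non-negative, and bounded by $|A^n|_T \le C$ at time $T$. If Part 1 yields convex combinations and Fatou limits $\tilde A^{\pm}$ for the two sequences, then (by taking a common further convex combination / diagonal subsequence) the corresponding $\tilde A := \tilde A^{+} - \tilde A^{-}$ Fatou-converges and inherits $|\tilde A|_T\le C$ from lower semicontinuity of total variation under Fatou convergence, together with the sublinearity $|A^{n,+} + A^{n,-}|_T = |A^n|_T \le C$.

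For Part 1, I would use a diagonal Koml\'os argument along the rationals. Enumerate the rationals in $[0,T]$ as $\{q_1, q_2, \ldots\}$ (including $T$). To sidestep the absence of an a priori $L^1$-bound, compose with the bounded strictly increasing map $\phi(x) = x/(1+x)$, so $\phi(A^n_{q_k}) \in [0,1]$. Apply Koml\'os' theorem in $L^1$ to $\phi(A^n_{q_1})$ to extract convex combinations $(A^{n,1})_n$ of the tails $(A^n,A^{n+1},\ldots)$ such that $\phi(A^{n,1}_{q_1})$ converges a.s.; then, inductively, refine to convex combinations $(A^{n,k})_n$ whose $\phi$-values at $q_k$ also converge a.s. The diagonal sequence $\tilde A^n := A^{n,n}$ consists of convex combinations of $(A^n, A^{n+1}, \ldots)$ and, on a single set of full probability, yields limits $L_q := \lim_n \tilde A^n_q \in [0,\infty]$ for every rational $q \in [0,T]$. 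Since each $\tilde A^n$ is non-decreasing, $q \mapsto L_q$ is non-decreasing along the rationals. Define $\tilde A_t := \lim_{q\searrow t,\,q\in\Q} L_q$ for $t\in[0,T)$ and $\tilde A_T := L_T$; then $\tilde A$ is non-decreasing and right-continuous by construction, and its left limits exist by monotonicity, so it is càdlàg with values in $[0,\infty]$.

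The main obstacle is verifying the Fatou-convergence at an \emph{arbitrary} $t \in [0,T)$, not just at the rationals. For this, pick rationals $q \searrow t$ with $q > t$; monotonicity of each $\tilde A^n$ gives $\tilde A^n_q \ge \tilde A^n_t$, hence $\lim_n \tilde A^n_q \ge \limsup_n \tilde A^n_t$, and an analogous lower bound from $q' \nearrow t$ (or by right-continuity of $\tilde A^n$ at $t$) traps $\lim_{q\searrow t}\lim_n \tilde A^n_q = \tilde A_t$ as required by the definition of Fatou convergence; the possibility that $L_q = \infty$ for some rationals is accommodated by allowing $\tilde A$ to take the value $\infty$. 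Adaptedness of $\tilde A$ follows because each $\tilde A^n_q$ is $\FF_q$-measurable and $\tilde A_t$ is obtained as a right-limit along rationals larger than $t$, using right-continuity of $\prf{\FF_t}$. Finally, handling $t = T$ just requires one more Koml\'os extraction on $\phi(\tilde A^n_T)$ to ensure $\tilde A^n_T \to \tilde A_T$ a.s., which is already built into the diagonal scheme above.
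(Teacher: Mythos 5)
Your Part 2 argument is essentially the paper's: decompose each $A^n$ into its Jordan (increasing/decreasing) parts, note that the parts are jointly bounded at $T$ by $C$, apply Part 1 to each (with a common diagonal extraction so the same convex weights work for both), and conclude. The paper does exactly this. For Part 1, however, the paper does \emph{not} prove it; it cites Theorem~4.2 of Kramkov's optional decomposition paper \cite{Kra96a}. Your attempt to prove Part 1 from scratch is therefore a genuinely different route, and it contains a gap.

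The gap is in the Koml\'os step. You compose with $\phi(x)=x/(1+x)$ to get $L^1$-boundedness and then ``apply Koml\'os to $\phi(A^n_{q_1})$ to extract convex combinations $(A^{n,1})_n$ of the tails $(A^n,A^{n+1},\dots)$ such that $\phi(A^{n,1}_{q_1})$ converges a.s.'' But Koml\'os applied to $\phi(A^n_{q_1})$ produces a.s.-convergent convex combinations $\sum_m\alpha^n_m\,\phi(A^m_{q_1})$ of the \emph{transformed} random variables, while what you need is convergence of $\phi\bigl(\sum_m\alpha^n_m A^m_{q_1}\bigr)$, since $\tilde A^n$ must be a convex combination of the processes $A^m$ themselves. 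Because $\phi$ is nonlinear (strictly concave), these are different; concavity only yields $\phi\bigl(\sum_m\alpha^n_m A^m_{q_1}\bigr)\ge\sum_m\alpha^n_m\phi(A^m_{q_1})$, which gives a $\liminf$ bound but not convergence, and the values $\tilde A^n_{q_1}$ could oscillate up to $+\infty$ while the $\phi$-averages settle. The clean fix is to drop $\phi$ entirely and invoke the version of Koml\'os' lemma for nonnegative, possibly $L^1$-unbounded random variables (e.g.\ Delbaen--Schachermayer, Lemma A1.1), which directly produces convex combinations of the $A^n_{q_k}$ converging a.s.\ to a $[0,\infty]$-valued limit. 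With that replacement, the rest of your diagonal construction and the passage to arbitrary $t\in[0,T)$ are fine; in fact the latter is immediate, since Fatou convergence at $t$ is by definition $\lim_{q\searrow t,\,q\in\QQ}L_q$, which is exactly how you defined $\tilde A_t$. Finally, note that what you are then reconstructing is essentially the content of the cited Kramkov theorem (which additionally packages the monotone a.s.\ convergence at rationals into an honest c\`adl\`ag process), so unless self-containment is a goal, citing it as the paper does is the economical choice.
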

\begin{proof} Part 1.  is a restatement of Theorem 4.2 in
\cite{Kra96a}. To prove part 2.,  note that the boundedness of
total variations of processes $A^n$ implies that the increasing
and decreasing parts $A^{\uparrow,n}$ and $A^{\downarrow,n}$ of
$A^n$ satisfy $A^{\uparrow,n}_T+A^{\downarrow,n}_T\leq C$ a.s. for
all $n$. Applying part 1. to increasing and decreasing parts and
noting that the limiting processes $\tilde{A}^{\uparrow}$ and
$\tilde{A}^{\downarrow}$ satisfy
$\tilde{A}^{\uparrow}+\tilde{A}^{\downarrow}\leq C$ a.s., leads to
the desired conclusion.
\end{proof}
\begin{proof}[Of Theorem \ref{semifun}.]
Let $X$ be a semimartingale taking values in the open interval
$I$. Our goal is to prove that the process $Y$ defined by
$Y_t=f(t,X_t)$ is a semimartingale. We first extend the
time-domain of $X$ and $Y$ by setting $X_t=X_T$ and $Y_t=f(t,X_T)$
for $t\in (T,\infty)$. By Theorem 6, p. 54 in \cite{Pro04}, it
will be enough to find an increasing sequence $(T_n)_{n\in\N}$ of
stopping times with $T_n\nearrow \infty$, a.s.,
 such that  the {\em pre-stopped} processes $Y^{T_n-}$ defined by
\begin{equation}
    \nonumber
    \begin{split}
  Y^{T_n-}_t= Y_t\inds{0\leq t<T_n}+Y_{T_n-}\inds{t\geq T_n}=f(t\wedge T_n,X^{T_n-}_t)
    \end{split}
\end{equation}
are semimartingales. Taking $T_n=\inf\sets{t\geq 0}{X_t\geq
n}\wedge n$, we reduce the problem to the case where the
semimartingale $X$ takes values in a compact interval $[x_1,x_2]$,
for $t\in [0,S)$, where $S=T\wedge T_n$.

Let $\eta^n: \R\times \R\to\R$ be a sequence of standard mollifier
functions with supports lying in the lower half-plane and
shrinking to a point, i.e.,
\begin{enumerate}
    \item $\eta^n\in C^{\infty}(\R\times \R)$.
    \item $\eta^n(t,x)\geq 0$, for all $t,x$ and
    $\int_{\R\times\R} \eta^n(t,x)\, dt\, dx=1$.
    \item The supports $\SS_n$ of $\eta^n$ satisfy
    $\SS_n\subseteq \R\times (-\infty,0]$ and
    $\abs{t}+\abs{x}\leq 1/n$ for all $(t,x)\in \SS_n$.
\end{enumerate}
Let the functions $f^n:[0,T]\times I_n\to\R$, where
$I_n=\sets{x\in I}{ d(x,I^c)>1/n}$,
 be the mollified versions of $f$, i.e.,
\begin{equation}
   \nonumber
   \begin{split}
 f^n(t,x)=(\eta^n * f)(t,x)=\int_{\R\times\R} \eta^n(s,y) f(t-s, x-y)\, ds\, dy,
   \end{split}
\end{equation}
where we set $f(t,x)=f(T,x)$ for $t>T$ and $f(t,x)=f(0,x)$ for
$t<0$. By standard arguments, the functions $f^n(t,x)$ have the
following properties
\begin{enumerate}
   \item $f^n(t,x)\to f(t,x)$ for all $(t,x)\in [0,T]\times I$,
   uniformly on compacts.
   \item $f^n(t,x)\in C^{\infty}([0,T]\times I_n)$.
   \item Let $C>0$ be a constant such that
   $\abs{f(t_2,x)-f(t_1,x)}\leq C\abs{t_2-t_1}$, for all
   $t_1,t_2\in [0,T]$ and $x\in [x_1,x_2]$.
   Then the absolute value $|f^n_t|$ of the time derivative $f_t^n$
   is bounded by the constant $C$, uniformly over $n\in\N$ and
   $(t,x)\in [0,T]\times [x_1,x_2]$.
   \item By condition 3. in the statement of the theorem and
   the fact that
   the support $\SS_n$ lies in the lower half-plane, we have
   $f^n_{x}(t,x)\to f_{x+}(t,x)$, for all $(t,x)\in [0,T]\times I$.
   \end{enumerate}

For $n\in\N$ such that $[x_1,x_2]\subseteq I_n$, the It\^ o
formula applied to $f^n$ implies that
$f^n(t,X_t)=f^n(0,X_0)+M^n_t+ A^n_t+B^n_t$, where
\begin{equation}
   \nonumber
   \begin{split}
   M^n_t&= \int_0^t f^n_x(s,X_{s-})\, dX_s,\quad
 A^n_t  = \int_0^t f^n_t(s,X_s)\, ds,\ \text{and}\\
 B^n_t & = \frac{1}{2} \int_0^t f^n_{xx}(x,X_{s-})\, d[X,X]^c_s\\
 &\qquad +
 \sum_{0< s\leq t}
 \big( f^n(s,X_s)-f^n(s,X_{s-})-f^n_x(s,X_{s-})\Delta X_s \big).
   \end{split}
\end{equation}
Note that:
\begin{enumerate}
    \item Using properties 3. and 4. (above) of $f^n$ and the
    Dominated Convergence Theorem for stochastic
integrals (see \cite{Pro04}, Theorem 32, p. 174), we have
$M^n_t\to M_t=\int_0^t f_{x+}(s,X_{s-})\, dX_s$, uniformly in
$t\in [0,T]$, in probability. It suffices to take a subsequence to
obtain convergence in the Fatou sense. \item By convexity of $f^n$
in the second variable, the processes $B^n_t$ are non-decreasing.
Thus, by Lemma \ref{Fat}, after a passage to a sequence of convex
combinations they Fatou-converge towards a non-decreasing \cd
adapted process $B$ taking values in $[0,\infty]$. \item The
processes in the sequence $A^n_t$ have total variation uniformly
bounded by $CT$, so by  part 2. of Lemma \ref{Fat}, there exists a
sequence of their convex combinations Fatou-converging towards a
process $A$ of finite variation with the total variation bounded
by the same constant $CT$.
\end{enumerate}
Compounding all subsequences and sequences of convex combinations
above, we obtain that $f(t,X_t)-f(0,X_0)-M_t-A_t=B_t$. We can
conclude that $B_T<\infty$, a.s. and
$f(t,X_t)=f(0,X_0)+M_t+A_t+B_t$ is a semimartingale.
\end{proof}
\begin{proposition} Every locally convexity-Lipschitz function
$f:[0,T]\times I\to\R$  admits a decomposition\label{regissemi}
$f=f^1-f^2$, where $f^1$ and $f^2$ satisfy  conditions 1.-3. of
Theorem \ref{semifun}. In particular, $f$ is a semimartingale
function.
\end{proposition}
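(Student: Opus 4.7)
The plan is to construct $f^1$ and $f^2$ by $t$-wise Jordan-decomposing the partial derivative $f_x$. Fix a basepoint $x_0\in I$. The convexity-Lipschitz hypothesis on every compact sub-interval of $I$ implies, in particular, that $f_x(t,\cdot)$ has locally bounded variation on $I$, so for each $t\in[0,T]$ the function $\phi_t(x):=f_x(t,x)-f_x(t,x_0)$ admits a Jordan decomposition into non-decreasing functions $h^1(t,\cdot),h^2(t,\cdot)$ on $I$, normalized by $h^i(t,x_0)=0$ and $h^1(t,\cdot)-h^2(t,\cdot)=\phi_t$. Continuity of $f_x(t,\cdot)$ in $x$ forces $h^i(t,\cdot)$ to be continuous as well. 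I then set
\[
f^1(t,x)=f(t,x_0)+f_x(t,x_0)(x-x_0)+\int_{x_0}^{x}h^1(t,u)\,du,\qquad
f^2(t,x)=\int_{x_0}^{x}h^2(t,u)\,du,
\]
so that $(f^1-f^2)_x=f_x$ and $(f^1-f^2)(t,x_0)=f(t,x_0)$, giving $f^1-f^2=f$. Since the $x$-derivatives $(f^1)_x=f_x(\cdot,x_0)+h^1$ and $(f^2)_x=h^2$ are non-decreasing in $x$, both $f^i$ are convex in the second variable, which is condition~2 of Theorem~\ref{semifun}.

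Condition~1 (Lipschitz in $t$, uniform for $x$ in compacts) rests on the standard estimate
\[
|h^i(t,x)-h^i(s,x)|\leq \mathrm{TV}\bigl(f_x(t,\cdot)-f_x(s,\cdot);\,[\min(x_0,x),\max(x_0,x)]\bigr),
\]
which follows from the fact that the positive and negative variation functionals are $1$-Lipschitz in the total-variation norm. The right-hand side is bounded by $C|t-s|$ on any compact $[x_1,x_2]\subset I$ (enlarged, if necessary, to contain $x_0$) via the convexity-Lipschitz constant of $f$ on that interval. Combining this with the bounds $|f(t,x_0)-f(s,x_0)|\leq C|t-s|$ and $|f_x(t,x_0)-f_x(s,x_0)|\leq C|t-s|$ built directly into the total convexity norm, and integrating in $u$, yields condition~1 for both $f^1$ and $f^2$.

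For condition~3, the right $x$-derivatives are $(f^1)_{x+}(t,x)=f_x(t,x_0)+h^1(t,x)$ and $(f^2)_{x+}(t,x)=h^2(t,x)$, which are jointly continuous in $(t,x)$: they are Lipschitz in $t$ uniformly in $x$ on compacts (by the estimate above) and continuous in $x$ for each fixed $t$, so joint continuity follows. Joint continuity in particular implies the required right-limit identity. Local boundedness reduces to a uniform-in-$t$ bound on $\mathrm{TV}(f_x(t,\cdot);[x_1,x_2])$, supplied by $\mathrm{TV}(f_x(t,\cdot);[x_1,x_2])\leq \mathrm{TV}(f_x(0,\cdot);[x_1,x_2])+CT$. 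The ``in particular'' clause is then immediate from Theorem~\ref{semifun}. The main obstacle I foresee is carrying out the Jordan decomposition \emph{globally} on the possibly unbounded open interval $I$: basing the decomposition at a single fixed $x_0$ and allowing $h^1,h^2$ to take opposite signs on the two sides of $x_0$ while remaining non-decreasing is what keeps this bookkeeping clean, and the local (in $x$) character of conditions~1 and~3 removes the need for any global bound on $h^i$.
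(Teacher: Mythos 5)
Your construction is correct and is essentially the same one the paper uses: Jordan-decompose $f_x(t,\cdot)$, integrate the two monotone pieces to obtain convex summands $f^1,f^2$, and derive the uniform-in-$x$ $t$-Lipschitz bound on the pieces from the Lipschitz dependence of the total variation built into the convexity-Lipschitz norm. The only cosmetic difference is that the paper carries out the decomposition on a compact subinterval anchored at its left endpoint $x_1$ and remarks that the general case follows, while you anchor at an interior basepoint $x_0$ and subtract off $f_x(t,x_0)$ first so that the same bookkeeping works directly on all of $I$; your handling of conditions 1--3 (the $1$-Lipschitz property of positive/negative variation in the $\mathrm{TV}$-norm, joint continuity of $f^i_{x+}$ from uniform $t$-Lipschitzness plus $x$-continuity, and the uniform bound $\mathrm{TV}(f_x(t,\cdot))\le \mathrm{TV}(f_x(0,\cdot))+CT$) is a reasonable expansion of what the paper leaves to the reader.
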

\begin{proof} We shall construct the desired decomposition
only on a compact interval $[x_1,x_2]$ in $I$, as the general case
follows immediately.

For a fixed $t\in [0,T]$, the finite-variation function
$f_{x}(t,\cdot)$ admits a decomposition into a difference of a
pair $f^{\uparrow}(t,\cdot)$ and $f^{\downarrow}(t,\cdot)$ of
non-increasing and non-negative functions. Lipschitz continuity of
the total variation of the derivative $f_x$ implies that the
functions $f^{\uparrow}$ and $f^{\downarrow}$ are Lipschitz
continuous in $t$, uniformly in $x\in [x_1,x_2]$. It is now easy
to check that the sought-for decomposition is $f=f^1-f^2$, where
$f^1(t,x)=f(t,x_1)+\int_{x_1}^x f^{\uparrow}(t,\xi)\, d\xi$, and
$f^2(t,x)=\int_{x_1}^x f^{\downarrow}(t,\xi)\, d\xi$.
\end{proof}
\begin{proposition}\label{invreg}
Let $f:[0,T]\times I\to\R$ be locally convexity-Lipschitz, with
the derivative $f_x$ positive and bounded away from $0$ on compact
subsets of $[0,T]\times I$. If the function $g:[0,T]\times J\to\R$
satisfies $f(t,g(t,y))=y$, for all $(t,y)\in [0,T]\times J$,
 then $g$ is a semimartingale function.
\end{proposition}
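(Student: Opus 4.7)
The plan is to invoke Proposition \ref{regissemi}: it suffices to show that $g$ is locally convexity-Lipschitz on every compact subrectangle of $[0,T]\times J$. Fix a compact $[y_1,y_2]\subset J$. Because $f(t,\cdot)\in C^1$ with $f_x>0$, the implicit function theorem gives joint continuity of $g$ and the formula $g_y(t,y)=1/f_x(t,g(t,y))$, so the image $[x_1,x_2]:=g([0,T]\times[y_1,y_2])\subset I$ is a compact interval on which $f_x\in[\delta,K]$ and $f$ is convexity-Lipschitz with some constant $C_f$. In particular $g_y$ is continuous and takes values in $[1/K,1/\delta]$.

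The goal is to bound $\norm{g(t,\cdot)-g(s,\cdot)}_{[y_1,y_2]}\leq C\abs{t-s}$, which has three components. First, $\abs{g(t,y_1)-g(s,y_1)}\leq C\abs{t-s}$ follows by writing $f(t,g(s,y_1))-f(s,g(s,y_1))=f(t,g(s,y_1))-f(t,g(t,y_1))=-\int_{g(s,y_1)}^{g(t,y_1)}f_x(t,\xi)\,d\xi$, bounding the left-hand side using that convexity-Lipschitz controls $\abs{f(t,x)-f(s,x)}$ pointwise, and using $f_x\geq\delta$ on the right. Second, the bound $\abs{g_y(t,y_1)-g_y(s,y_1)}\leq C\abs{t-s}$ is obtained by combining the identity $g_y=1/(f_x\circ g)$, the pointwise-in-$x$ Lipschitz-in-$t$ estimate for $f_x$ (inherited from the convexity-Lipschitz norm), and the estimate from the first step.

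The third and main component is the total-variation bound $\mathrm{TV}_y(g_y(t,\cdot)-g_y(s,\cdot);[y_1,y_2])\leq C\abs{t-s}$. Writing $\psi(u,x):=1/f_x(u,x)$, we split
\begin{equation*}
    g_y(t,y)-g_y(s,y)=\underbrace{\psi(t,g(t,y))-\psi(s,g(t,y))}_{P(y)}+\underbrace{\psi(s,g(t,y))-\psi(s,g(s,y))}_{Q(y)}.
\end{equation*}
For $P$: since $g(t,\cdot)$ is a continuous monotone bijection, the change-of-variable identity for total variation gives $\mathrm{TV}_y(P;[y_1,y_2])=\mathrm{TV}_x(\psi(t,\cdot)-\psi(s,\cdot);[g(t,y_1),g(t,y_2)])$, which is $\leq C\abs{t-s}$ by a direct calculation using $\psi(t,x)-\psi(s,x)=[f_x(s,x)-f_x(t,x)]/[f_x(t,x)f_x(s,x)]$ together with the convexity-Lipschitz estimate $\mathrm{TV}_x(f_x(t,\cdot)-f_x(s,\cdot))\leq C_f\abs{t-s}$ and the lower bound $f_x\geq\delta$. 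For $Q$, expressed as $Q(y)=\int_{g(s,y)}^{g(t,y)}d\psi(s,\xi)$, the bound on $\mathrm{TV}_y(Q)$ will follow from a Fubini-type estimate: for any partition $y_1=\eta_0<\dots<\eta_N=y_2$, the differences $Q(\eta_{i+1})-Q(\eta_i)=\int_{g(t,\eta_i)}^{g(t,\eta_{i+1})}d\psi(s,\cdot)-\int_{g(s,\eta_i)}^{g(s,\eta_{i+1})}d\psi(s,\cdot)$ are controlled by the $\psi(s,\cdot)$-variation over small symmetric-difference intervals, which are of $x$-length at most $C\abs{t-s}$ by step~(i).

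This third step is the main obstacle: unlike the first two, where only pointwise estimates are needed, the TV bound requires combining the uniform BV structure of $\psi(s,\cdot)$ (coming from $f$ being convexity-Lipschitz with $f_x\geq\delta$) with the Lipschitz-in-$t$ shift $\abs{g(t,y)-g(s,y)}\leq C\abs{t-s}$ uniformly in $y$, and must be done carefully so that the contribution of $Q$ remains linear in $\abs{t-s}$ rather than merely bounded by a modulus of continuity. Once the TV bound is established, the three estimates together yield local convexity-Lipschitz of $g$, and Proposition \ref{regissemi} delivers the conclusion that $g$ is a semimartingale function.
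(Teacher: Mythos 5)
Your reduction of the problem to three estimates on $\|g(t,\cdot)-g(s,\cdot)\|_{[y_1,y_2]}$ is clean, and your steps (i) and (ii) are correct and essentially what the paper also establishes (Lipschitz continuity of $g$ and $g_y$ in $t$). You have also correctly identified step (iii), the total-variation bound on $g_y(t,\cdot)-g_y(s,\cdot)$, as the crux. Unfortunately, that bound is \emph{false} in general, so the overall strategy — prove $g$ is locally convexity-Lipschitz, then invoke Proposition~\ref{regissemi} — cannot be made to work.

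Here is a counterexample. Fix $\alpha\in(0,1)$ and take
\begin{equation}
\nonumber
f(t,x)=x-t-\alpha\max(0,x),\qquad f_x(t,x)=1-\alpha\ind{\set{x\geq 0}}.
\end{equation}
Since $f_x$ does not depend on $t$, the convexity norm of $f(t,\cdot)-f(s,\cdot)$ on any compact equals $\abs{t-s}$, so $f$ is locally convexity-Lipschitz; also $f_x\in\set{1,1-\alpha}$ is bounded away from $0$. Its inverse in the second variable is
\begin{equation}
\nonumber
g(t,y)=(y+t)+\tfrac{\alpha}{1-\alpha}\max(0,y+t),\qquad g_y(t,y)=1+\tfrac{\alpha}{1-\alpha}\ind{\set{y\geq -t}}.
\end{equation}
The kink of $g(t,\cdot)$ sits at $y=-t$ and moves with $t$. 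Hence, for any compact $[y_1,y_2]$ whose interior meets both $-t$ and $-s$,
\begin{equation}
\nonumber
{\mathrm TV}\big(g_y(t,\cdot)-g_y(s,\cdot);[y_1,y_2]\big)=\tfrac{2\alpha}{1-\alpha},
\end{equation}
which does not tend to $0$ as $s\to t$. So $g$ is \emph{not} locally convexity-Lipschitz, and no Fubini-type estimate can deliver the bound you want in step (iii): the symmetric-difference interval has $x$-length $O(\abs{t-s})$, but the $\psi(s,\cdot)$-mass it carries is $O(1)$ when $f_x$ has a jump, and those contributions do not telescope away.

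The paper avoids this by \emph{not} trying to show $g$ is convexity-Lipschitz. It instead constructs an explicit decomposition $g=g^1-g^2$, with $g^2_y(t,y)=\int_{g(t,y_1)}^{g(t,y)}g_y(t,f(t,\xi))^2\,f^1_x(t,d\xi)$ (after the change of variables $\xi=g(t,\eta)$), and verifies that $g^1,g^2$ satisfy conditions~1--3 of Theorem~\ref{semifun} directly. These conditions are genuinely weaker than convexity-Lipschitz: they tolerate kinks that move with $t$, provided the derivatives are jointly right-continuous in $(t,x)$ with $s\geq t$, $x'\geq x$ (this is precisely what the lower-half-plane mollifier in the proof of Theorem~\ref{semifun} requires, and what the example above satisfies). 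The key technical point in the paper's route is the change of variables that pushes the Stieltjes measure $h^1(t,d\eta)$ onto $f^1_x(t,d\xi)$, a measure in the \emph{$x$-variable}, where the convexity-Lipschitz property of $f$ applies; only then does the $I_1+I_2+I_3+I_4$ splitting yield a Lipschitz-in-$t$ bound. Your decomposition $P+Q$ keeps the jump of $\psi(s,\cdot)$ in the $x$-variable but integrates against the $y$-variable, which is exactly where the moving-kink pathology lives. If you want to salvage the argument, the right move is to aim for the weaker target (Theorem~\ref{semifun}'s conditions on $g^1,g^2$) rather than for convexity-Lipschitz of $g$ itself.
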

\begin{proof}
We note first that the assumptions of the proposition imply that
both $f$ and $g$ are continuous and strictly increasing in the
second argument. To simplify the proof, we shall restrict the
domain of $g$ to a compact set of the form $[0,T]\times
[y_1,y_2]$, so that the range of $g$ is contained in a compact set
$[x_1,x_2]\subseteq I$. The general case will follow by {\em
pre-stopping} - the technique used in the proof of Theorem
\ref{semifun}.  Using the relationships
$0=f(t,g(t,y))-f(s,g(s,y))$ and $g_y(t,y)f_x(t,g(t,y))=1$ together
with the properties of function $f$ postulated in the statement,
it is tedious but straightforward to prove that both $g$ and $g_y$
are Lipschitz continuous in both variables.

Our next task is to decompose the function $g$ into a difference
of two functions satisfying conditions 1.-3. in Theorem
\ref{semifun}. By Proposition \ref{regissemi}, $f$ has a
decomposition $f(t,x)=f^1(t,x)-f^2(t,x)$ with properties 1.-3.
from Theorem \ref{semifun}. Let $h^i(t,y)$ denote the compositions
$f^i_{x}(t,g(t,y))$, $i=1,2$, and let $h(t,x)=h^1(t,x)-h^2(t,x)$
so that $f_{x}(t,g(t,y))=h(t,y)$. Then, for $i=1,2$,
$h^i(t,\cdot)$ is a non-decreasing function and for $y\in
[y_1,y_2]$,
\begin{equation}
    \label{derg}
    \begin{split}
    g_{y}(t,y)-g_{y}(t,y_1)&=
    - \int_{y_1}^y \frac{h(t,d\eta)}{(f_{x}(t,g(t,\eta)))^2} \\
  &= -
  \int_{y_1}^y g_{y}(t,\eta)^2 \, (h^1(t,d\eta)-h^2(t,d\eta)),
    \end{split}
\end{equation}
where $h^i(t,d\eta)$ stands for the Lebesgue-Stieltjes measure
induced by $h^i(t,\cdot)$. With $g^1$ and $g^2$ defined as
\begin{equation}
    \nonumber
    \begin{split}
  g^1(t,y)& =g(t,y_1)+g_{y}(t,y_1)(y-y_1)+
  \int_{y_1}^y \int_{y_1}^z g_{y}(t,\eta)^2 \, h^2(t,d\eta)\, dz,\\
  g^2(t,y)&=
  \int_{y_1}^y \int_{y_1}^z g_{y}(t,\eta)^2 \, h^1(t,d\eta)\, dz,
    \end{split}
\end{equation}
(\ref{derg}) implies that $g(t,y)=g^1(t,y)-g^2(t,y)$.

What follows is the proof of Lipschitz-continuity of $g^2_y(\cdot,
y)$. A simple change of variables - valid due to the continuity of
the function $g$ - yields $g^2_y(t,y)=\int_{g(t,y_1)}^{g(t,y)}
g_y(t,f(t,\xi))^2 f^1_x(t,d\xi)$, so, for $t,s\in [0,T]$ the
difference $g^2_y(t,y)-g^2_y(s,y)$ can be decomposed into the sum
$I_1+I_2+I_3+I_4$ where
      \[I_1 =\int_{g(t,y_1)}^{g(s,y_1)} g_y(s,f(s,\xi))^2 f^1_x(s,d\xi)\,
      ,I_2 =\int_{g(s,y)}^{g(t,y)} g_y(s,f(s,\xi))^2 f^1_x(s,d\xi),\]
\begin{equation}
    \nonumber
    \begin{split}
      I_3& =\int_{g(t,y_1)}^{g(t,y)}
      \big( g_y(t,f(t,\xi))^2-g_y(s,f(s,\xi))^2\big) f^1_x(t,d\xi),
      \ \text{and}
    \end{split}
\end{equation}
\begin{equation}
    \nonumber
    \begin{split}
      I_4& =\int_{g(t,y_1)}^{g(t,y)} g_y(s,f(s,\xi))^2
      \big(f^1_x(t,d\xi)-f^1_x(s,d\xi)\big)
    \end{split}
\end{equation}
Due to boundedness of $g$ and $g_y$ and Lipschitz continuity of
$g,g_y$ and $f_x$, the absolute values of the expressions $I_1$,
$I_2$ and $I_3$ are easily seen to be bounded by a constant
multiple of  $\abs{t-s}$. Additionally, the Lipschitz property of
the total-variation functional allows us to conclude the same for
$I_4$. Consequently, there exists a constant $C$ such that
$\abs{g^2_y(t,y)-g^2_y(s,y)}\leq C \abs{t-s}$, for all $y\in
[y_1,y_2]$.

Finally, to show that $g$ is a semimartingale function,  it
suffices to check that both $g^1$ and $g^2$ satisfy conditions
1.-3. of Theorem \ref{semifun}. The increase of the functions
$h^1(t,\cdot)$ and $h^2(t,\cdot)$ implies that $g^1(t,\cdot)$ and
$g^2(t,\cdot)$ are convex. Lipschitz-continuity of $g$ and $g^2$
in the time variable implies the same for $g^1=g+g^2$. Finally,
the derivatives $g^1_y$ and $g^2_y$ are continuous due to the
continuity of  functions $(f^1)_x(t,\cdot)$ and
$(f^2)_x(t,\cdot)$.
\end{proof}

\paragraph{\em The multiplicative decomposition of positive semimartingales.}
\label{sub:mult} A key step in the transition from abstract to
stochastic equilibria is the multiplicative decomposition of the
pricing functional which enforces the abstract equilibrium. In
this paragraph we give sufficient conditions on a positive
semimartingale in order for the local martingale part in its
multiplicative decomposition to be, in fact, a uniformly
integrable martingale.

The following proposition establishes some useful stability
properties of the condition $\NN(X)\in\linf$.
\begin{proposition}\ \label{stab}
\begin{enumerate}
    \item Let $X^1$ and $X^2$ be
semimartingales, and let $X=\min(X^1,X^2)$.  If $\NN(X^1)\in\linf$
and $\NN(X^2)\in \linf$, then $\NN(X)\in \linf$.
    \item Suppose $f:[0,T]\times I\to\R$ is a function  verifying the
conditions of Theorem \ref{semifun}, and $X$ is a bounded positive
semimartingale, bounded away from $0$, such that $\NN(X)\in\linf$.
Then the process $Y$, defined by $Y_t=f(t,X_t)$, satisfies
$\NN(Y)\in \linf $.
\end{enumerate}
\end{proposition}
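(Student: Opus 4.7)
The plan is to handle the two parts separately. Part 2 reduces in one step to the explicit formula for the local-martingale part provided by Theorem \ref{semifun}, while Part 1 requires a Meyer-Itô-type decomposition together with a careful analysis of the compensators of jump-correction terms.

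For Part 2, the boundedness and positivity assumptions confine the paths of $X$ to a compact subinterval $[a,b] \subset I$ with $a > 0$. The conditions on $f^1, f^2$ (time-Lipschitz, convex in $x$, bounded right-derivative) imply joint continuity of $f$, so the process $Y_t = f(t, X_t)$ is bounded and in particular special. By Theorem \ref{semifun}, its local-martingale part is $M^Y_t = \int_0^t f_{x+}(s, X_{s-})\, dM^X_s$, where $M^X$ denotes the local-martingale part of $X$. Condition 3 of Theorem \ref{semifun} supplies a uniform bound $|f_{x+}| \leq C$ on $[0,T] \times [a,b]$, from which
\[
\NN(Y) = \langle M^Y, M^Y\rangle_T = \int_0^T f_{x+}(s, X_{s-})^2\, d\langle M^X, M^X\rangle_s \leq C^2\, \NN(X) \in \linf.
\]

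For Part 1, I would start from the identity $\min(X^1, X^2) = \tfrac{1}{2}(X^1 + X^2) - \tfrac{1}{2}|X^1 - X^2|$. Setting $Z := X^1 - X^2$, the hypothesis $\NN(X^i) \in \linf$ makes $X^1$ and $X^2$ special, hence so is $Z$, with $M^Z = M^1 - M^2$ and $\langle M^Z\rangle_T \leq 2(\NN(X^1) + \NN(X^2)) \in \linf$ by the parallelogram-type inequality $\langle N - K\rangle \leq 2(\langle N\rangle + \langle K\rangle)$. The Meyer-Itô formula for the convex function $|\cdot|$ expands
\[
|Z_t| = |Z_0| + \int_0^t \sgn(Z_{s-})\, dZ_s + L^0_t + J_t,
\]
where $L^0$ is the (continuous, non-decreasing) local time of $Z$ at $0$, and $J_t = \sum_{0 < s \leq t}\bigl(|Z_s| - |Z_{s-}| - \sgn(Z_{s-})\Delta Z_s\bigr) \geq 0$ by convexity of $|\cdot|$. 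Splitting $\int \sgn(Z_{s-})\, dZ_s = \int \sgn(Z_{s-})\, dM^Z_s + \int \sgn(Z_{s-})\, dA^Z_s$ and compensating $J = (J - J^p) + J^p$, the local-martingale part of $|Z|$ becomes $M^{|Z|} = \int \sgn(Z_{s-})\, dM^Z_s + (J - J^p)$, whose compensated quadratic variation is dominated by a constant multiple of $\NN(Z)$ using the pointwise estimate $|\Delta J_s| \leq 2|\Delta Z_s|$. Reassembling, $\NN(\min(X^1, X^2))$ is bounded by a constant multiple of $\NN(X^1) + \NN(X^2) \in \linf$.

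The main technical obstacle is the Part 1 estimate of $\langle J - J^p\rangle_T$: one must verify non-negativity of $\Delta J$ by convexity of $|\cdot|$, identify the local-martingale contribution $J - J^p$ after compensation, and pass from the pointwise bound $|\Delta J| \leq 2|\Delta Z|$ through $[J - J^p, J - J^p]$ and a comparison with $[Z, Z]$ to an inequality on compensators controlled by $\NN(Z)$. The computation in Part 2 and the algebraic reduction in Part 1 are otherwise routine.
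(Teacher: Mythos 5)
Your Part~2 is essentially identical to the paper's proof: reduce to a compact range $[\eps,1/\eps]$, read off the local-martingale part of $Y$ from Theorem~\ref{semifun} as $\int_0^t f_{x+}(s,X_{s-})\,dM_s$, and use the uniform bound on $f_{x+}$ over the compact set.

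For Part~1 you route through $\min(X^1,X^2)=\tfrac12(X^1+X^2)-\tfrac12|Z|$, $Z=X^1-X^2$, and Tanaka's formula, whereas the paper applies Meyer--It\^o directly (equivalently, to the convex function $z\mapsto z^+$ via $\min(X^1,X^2)=X^1-(X^1-X^2)^+$) and reads off the stochastic-integral part $\int_0^t\inds{X^1_{s-}\le X^2_{s-}}\,dM^1_s+\int_0^t\inds{X^1_{s-}>X^2_{s-}}\,dM^2_s$, then bounds its bracket using the disjointness of the two integrands. These two reductions are equivalent. To your credit you notice that the jump-correction sum $J$ is optional but not predictable, so it must be compensated before one has the \emph{canonical} decomposition of $|Z|$, and the local-martingale part picks up the extra piece $J-J^p$. (The paper's own proof silently elides this: the displayed $M_t$ is the It\^o stochastic-integral part, not the canonical local-martingale part, which differs by exactly this compensated jump term.) However, your sketched estimate for $\scl{J-J^p}{J-J^p}_T$ does not close. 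The pointwise bound $|\Delta J_s|\le 2|\Delta Z_s|$ is correct, but pushing it through $[J-J^p,J-J^p]$ and the compensator of $[Z,Z]$ brings in the full jump $\Delta Z_s=\Delta M^Z_s+\Delta A^Z_s$: the compensator of $\sum(\Delta Z_s)^2$ equals $\scl{(M^Z)^d}{(M^Z)^d}_T$ \emph{plus} the purely predictable quantity $\sum(\Delta A^Z_s)^2$, and the latter is not controlled by $\NN(X^1)+\NN(X^2)$. What is actually needed is a bound on the martingale jump $\Delta M^{|Z|}_s=\Delta|Z|_s-{}^p(\Delta|Z|_s)$ directly in terms of $\Delta M^Z_s$; for instance, conditioning on $\FF_{s-}$, the map $x\mapsto|Z_{s-}+x|-|Z_{s-}|$ is $1$-Lipschitz, hence
\begin{equation}
\nonumber
\EE\bigl[(\Delta M^{|Z|}_s)^2\mid\FF_{s-}\bigr]=\Var\bigl(\Delta|Z|_s\mid\FF_{s-}\bigr)\le\Var\bigl(\Delta Z_s\mid\FF_{s-}\bigr)=\EE\bigl[(\Delta M^Z_s)^2\mid\FF_{s-}\bigr],
\end{equation}
which is the comparison one wants at predictable times (with a companion argument via the jump-measure compensator at totally inaccessible times). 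Without an argument of this type, the estimate on the compensated jump part is a genuine gap; comparing with $[Z,Z]$ alone will not yield a bound by $\NN(X^1)+\NN(X^2)$.
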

\begin{proof}\
\begin{enumerate}
\item Let $X=M+A$, $X^{i}=M^{i}+A^{i}$, $i=1,2$ be the
semimartingale decompositions of $X$, $X^1$ and $X^2$. The
Meyer-It\^ o formula (see Theorem 70, p. 214 in \cite{Pro04})
states that $M_t=\int_0^t \inds{X^1_{s-}\leq X^2_{s-}} dM^1_s+
\int_0^t \inds{X^1_{s-}> X^2_{s-}} dM^2_s$, so $\scl{M}{M}_T\leq
\scl{M^1}{M^1}_T+\scl{M^2}{M^2}_T\in\linf$.

\item Assume that $X$ takes values in $[\eps,1/\eps]$, for some
$\eps>0$. It suffices to note that the right-continuous function
$f_{x+} (t,x)$ is bounded  on the compact set $[\eps,1/\eps]\times
[0,T]$, and that Proposition \ref{semifun} implies that the local
martingale part of the semimartingale $Y_t$ is given by $\int_0^t
f_{x+}(s,X_{s-})\, dM_s$.
\end{enumerate}
\end{proof}

\begin{proposition}\label{Decomp}
Let $X$ be a positive semimartingale bounded from above and away
from zero, such that $\NN(X)\in \linf$. Then $X$ admits a
multiplicative decomposition $X=\hQ\beta$ where $\beta$ is a
positive predictable process of finite variation, and $\hQ$ is a
positive uniformly integrable martingale.
\end{proposition}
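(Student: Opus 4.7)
The plan follows a three-step strategy: (i) establish the classical multiplicative decomposition with $\hQ$ merely a local martingale, (ii) control the stochastic logarithm of $\hQ$ using $\NN(X)\in\linf$ together with the lower bound on $X_-$, and (iii) invoke the M\'emin--Shiryaev criterion from \cite{MemShi79} to upgrade $\hQ$ from a local martingale to a uniformly integrable martingale.

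For (i), because $X$ is \cd with $\eps\le X\le 1/\eps$, the left-limit process also satisfies $\eps\le X_-\le 1/\eps$, so both $X$ and $X_-$ are strictly positive. The assumption $\NN(X)\in\linf$ in particular forces $X$ to be a special semimartingale with canonical additive decomposition $X=X_0+M+A$, $M$ a local martingale and $A$ predictable of finite variation. The standard multiplicative decomposition for positive special semimartingales with strictly positive left limits (Jacod--Shiryaev) then yields $X=X_0\,\hQ\,\beta$, where $\hQ$ is a positive local martingale starting at $1$ and $\beta$ is a positive \cd predictable process of finite variation starting at $1$. The factor $X_0$ is deterministic by $\FF_0=\set{\emptyset,\Omega}$ and can be absorbed into $\hQ$ without disturbing either property.

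For (ii), write $\hQ=\mathcal{E}(\tilde L)$ with $\tilde L$ the stochastic logarithm of $\hQ$; this is a local martingale satisfying $1+\Delta\tilde L_s=\hQ_s/\hQ_{s-}>0$ (equivalently, $\hQ>0$). Unwinding the multiplicative decomposition identifies $\tilde L$ as being obtained from the local martingale $\int_0^\cdot X_{s-}^{-1}\,dM_s$ by predictable and bounded jump corrections, and in any case $\scl{\tilde L}{\tilde L}_T$ is dominated by a constant multiple of $\int_0^T X_{s-}^{-2}\,d\scl{M}{M}_s$, which itself is bounded by $\eps^{-2}\,\NN(X)\in\linf$.

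For (iii), the combination of $\scl{\tilde L}{\tilde L}_T\in\linf$ and $\Delta\tilde L>-1$ is exactly the hypothesis of the M\'emin--Shiryaev sufficient condition for the stochastic exponential $\mathcal{E}(\tilde L)$ to be a uniformly integrable martingale (the $\linf$-bound on the angle bracket places $\tilde L$ in BMO, which together with positivity of $\mathcal{E}(\tilde L)$ is what the criterion requires). This delivers the true-martingale property of $\hQ$ and completes the proof. The principal obstacle is step (ii): extracting the precise form of $\tilde L$ in the presence of jumps and rigorously verifying the $\linf$-bound on $\scl{\tilde L}{\tilde L}_T$; step (i) is classical, and step (iii) is a direct invocation of the result in \cite{MemShi79}.
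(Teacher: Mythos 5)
Your strategy mirrors the paper's almost exactly — Jacod--Shiryaev's explicit multiplicative decomposition (their Theorem 8.21) followed by the M\'emin--Shiryaev predictable criterion applied to the stochastic logarithm $\tilde L=\hat M=\int H\,dM$ with $H=1/{}^pX$ bounded. Steps (i) and (ii) are fine (your ``constant multiple of $\int X_{s-}^{-2}\,d\scl{M}{M}_s$'' is really $\int H_s^2\,d\scl{M}{M}_s\le\eps^{-2}\scl{M}{M}_T$, but the conclusion $\scl{\tilde L}{\tilde L}_T\in\linf$ is correct). The gap is in step (iii).

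You assert that $\scl{\tilde L}{\tilde L}_T\in\linf$ together with $\Delta\tilde L>-1$ ``is exactly the hypothesis'' of the M\'emin--Shiryaev criterion, and in the parenthetical that the $\linf$-bound on the angle bracket already places $\tilde L$ in BMO. Neither claim holds as stated: an $\linf$-bound on $\scl{\tilde L}{\tilde L}_\infty$ controls $\EE[[\tilde L,\tilde L]_\infty-[\tilde L,\tilde L]_\tau\mid\FF_\tau]$ but not the contribution $(\Delta\tilde L_\tau)^2$, and the mere inequality $\Delta\tilde L>-1$ (which just records positivity of $\hQ$) is too weak — both the BMO claim and Th\'eor\`eme 1 of \cite{MemShi79} need the jumps $\Delta\tilde L$ to be uniformly bounded, or at least bounded away from $-1$. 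This is precisely the extra ingredient the paper supplies: since $X$ is a bounded special semimartingale, Lemma 4.24, p.~44 in \cite{JacShi03} gives that the jumps $\abs{\Delta M}$ of its local martingale part are uniformly bounded, and then $\Delta\tilde L=H\,\Delta M$ is bounded because $\eps\le H\le 1/\eps$. You should add this observation; without it the invocation of \cite{MemShi79} is not justified.
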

\begin{proof}
Without loss of generality we assume $X_0=1$. By Theorem 8.21, p.
138 in \cite{JacShi03}, along with the semimartingale
decomposition $X=M+A$, $X$ also admits a multiplicative
decomposition of the form $X=\hQ\beta$. The same theorem states
that $\hQ=\EN(\hat{M})$ and $\ 1/\beta=\EN(\hat{A})$, where
$\hat{M}_t=\int_0^t H_s\, dM_s$, $\hat{A}_t=-\int_0^t H_s\, dA_s$,
and $H=\frac{1}{X_{-}+\Delta A}$ is the reciprocal of the
predictable projection ${}^p(X)$ of $X$. For a constant $\eps>0$
such that $\eps\leq X\leq 1/\eps$ we obviously have $\eps\leq
H\leq 1/\eps$, a.s. Thanks to the boundedness of $H$, the
compensator $\scl{\hat{M}}{\hat{M}}$ of the quadratic variation
$[\hat{M},\hat{M}]$ satisfies $\scl{\hat{M}}{\hat{M}}_T=\int_0^T
H_u^2 \, d\scl{M}{M}_u\leq 1/\eps^2 \scl{M}{M}_T=\NN(X)\in\linf$.
The conclusion now follows from Th{\' e}or{\` e}me 1, p. 147 in
\cite{MemShi79}, aided by the fact that absolute values
$\abs{\Delta M}$ of the jumps of the local martingale $M$ are
uniformly bounded (see Lemma 4.24, p. 44 in \cite{JacShi03}).
\end{proof}

\def\cprime{$'$} \def\cprime{$'$}
\providecommand{\bysame}{\leavevmode\hbox to3em{\hrulefill}\thinspace}
\providecommand{\MR}{\relax\ifhmode\unskip\space\fi MR }
\providecommand{\MRhref}[2]{%
  \href{http://www.ams.org/mathscinet-getitem?mr=#1}{#2}
}
\providecommand{\href}[2]{#2}

\end{document}